\newtheorem{theorem}{Theorem}
\newtheorem{lemma}[theorem]{Lemma}
\newtheorem{algorithm}[theorem]{Mechanism}
\def\section{\@startsection {section}{1}{\z@}{-3.5ex plus -1ex minus
 -.2ex}{2.3ex plus .2ex}{\large\bf}}
\def\bfm#1{\mbox{\boldmath$#1$}}
\def\0{\bfm 0}
\DeclareMathAlphabet{\mathpzc}{OT1}{pzc}{m}{it}
\title{\bf Copula-based Randomized Mechanisms for Truthful Scheduling on Two  Unrelated Machines}
\author[1]{Xujin Chen}
\author[2]{Donglei Du}
\author[3]{Luis F. Zuluaga}
\affil[1]{Institute of Applied Mathematics, AMSS, Chinese Academy of Sciences

Beijing 100190, China.

{\tt xchen@amss.ac.cn}}
\affil[2]{Faculty of Business Administration, University of New Brunswick,

Fredericton NB Canada E3B 9Y2.

{\tt ddu@unb.ca}}
\affil[3]{Department of Industrial and Systems Engineering, Lehigh University,

Bethlehem, PA, USA 18015.

{\tt luis.zuluaga@lehigh.edu} }
\begin{document}
 \date{}
\maketitle

\begin{abstract} We design a Copula-based generic randomized truthful mechanism for scheduling on two unrelated machines with approximation ratio within $[1.5852, 1.58606]$, offering an improved upper bound for the two-machine case.  Moreover, we provide an upper bound $1.5067711$ for the two-machine two-task case, which is almost tight in view of the lower bound of 1.506  for the scale-free truthful mechanisms \cite{l09}. Of independent interest is the explicit incorporation of the concept of Copula in the design and analysis of the proposed approximation algorithm. We hope that techniques like this one will also prove useful in solving other problems in the future.
\end{abstract}

\noindent{\bf keywords.}\quad Algorithmic mechanism design, Random mechanism, Copula, Truthful scheduling

\openup 1.22\jot

\newcounter{my}
\newenvironment{mylabel}
{
\begin{list}{(\roman{my})}{
\setlength{\parsep}{-1mm}
\setlength{\labelwidth}{8mm}
\usecounter{my}}
}{\end{list}}

\newcounter{my2}
\newenvironment{mylabel2}
{
\begin{list}{(\alph{my2})}{
\setlength{\parsep}{-1mm} \setlength{\labelwidth}{8mm}
\setlength{\leftmargin}{6mm}
\usecounter{my2}}
}{\end{list}}

\newcounter{my3}
\newenvironment{mylabel3}
{
\begin{list}{(\alph{my3})}{
\setlength{\parsep}{-1mm}
\setlength{\labelwidth}{8mm}
\setlength{\leftmargin}{14mm}
\usecounter{my3}}
}{\end{list}}

\section{Introduction}
The main focus of this work is to offer randomized truthful mechanisms with improved approximation for {minimizing makespan on unrelated parallel machines:} $R2||C_{\max}$, a central problem extensively investigated in both the classical scheduling theory and the more recent algorithmic mechanism design initiated by  {the seminal work of} Nisan and Roenn \cite{nr01}.

Formally, {we are interested in the following  scheduling problem}: there are $n$ tasks to be processed by {$m$} machines. Machine {$i\in\{1,2,\ldots,m\}$}   takes $t_{ij}$ time to process task {$j\in\{1,2,\ldots, n\}$.} The objective is to schedule these  tasks non-preemptively on these machines to minimize the  makespan -- the latest completion time among all the tasks.
An allocation  {for the scheduling problem} 
is specified by a set of binary variables { $x_{ij}$ such that}
$x_{ij}=1$ if and only if task $j$ is allocated to machine $i$.

Different from traditional approximation algorithms for the scheduling problem, we focus on the class of monotonic algorithms defined as follows: an allocation {or a} {scheduling} algorithm is \emph{monotonic} if for any two instances of the scheduling problem $t_{ij}$ and $\widetilde{t}_{ij}$ ({$i=1,2,\ldots,m$ and $j=1,2,\ldots, n$}) differing only on  {a single} machine, the allocation  $x_{ij}$ and $\widetilde{x}_{ij}$ returned by the algorithm satisfies $\sum_{j=1}^n\left(x_{ij}-\widetilde{x}_{ij}\right)\left(t_{ij}-\widetilde{t}_{ij}\right)\le0$ for all $i=1,2,\ldots,m$.

The interest in monotonic algorithms stems from its connection to truthful mechanism design, where selfish agents maximize their profit by revealing their true private information. In this particular scheduling problem,  {a mechanism consists of two algorithms, an \emph{allocation} algorithm which allocates tasks to machines and a \emph{payment} algorithm which specifies the payment every machine receives. Each machine is a selfish} agent who knows its own processing time for every task and wants to maximize its own payoff -- the payment received minus the total execution time for the tasks allocated to it.
A mechanism is \emph{truthful} if it is a dominant strategy for each machine to reveal  its
true processing time. It is well-known that the monotonicity property above characterizes the allocation algorithm in any truthful mechanism for the scheduling problem on-hand {(see e.g., \cite{KV07}).}  
In this paper,
we are {concerned with} the approximation ratio of monotonic allocation algorithms.  {When the allocation algorithm is randomized, i.e., the binary variables $x_{ij}$ ($i=1,2,\ldots,m$, $j=1,2,\ldots,n$) output by the algorithm are random variables, we call the allocation algorithm {\em monotonic} if it is a probability distribution over a family of deterministic monotone allocation algorithms.
Every monotonic randomized allocation algorithm gives rise to a (universally) truthful mechanism~\cite{nr01}.}

As usual, the {\em approximation ratio} of an allocation algorithm is the worst-case ratio between the makespan of the allocation output by the algorithm and the optimal makespan. One fundamental open problem  {on the mechanism design for scheduling} 
is to find the exact approximation ratios $R_{\textsc{det}}$ and $R_{\textsc{ran}}$ among all monotonic deterministic and randomized allocation algorithms respectively  {\cite{nr01}}. The current best bounds are {$2.618\approx 1+\phi \le R_{\textsc{det}}\le m$} with the upper and lower bounds  established by
Nisan and Ronen \cite{nr01} and Koutsoupias and Vidali~\cite{KV07}, respectively, and $2-1/m\le R_{\textsc{ran}}\le 0.83685m$ with the upper and lower bounds  proved by
Mu'alem and Schapira \cite{mu2007setting} and Lu and Yu  \cite{ly2008-stacs}, respectively.

 {In view of the unbounded gap between the lower and upper bounds for the general $m$ machines, a lot of research efforts have been devoted to the special case of $m=2$ machines (see e.g., \cite{ckv08,l09,nr01}), which is highly nontrivial and suggests more insights for resolving the general problem.}
In this {paper}, we will focus on the  {two-machine case.}  {The deterministic approximation is exactly $2$ as shown by Nisan and Ronen \cite{nr01}}. The currently best randomized approximation ratio is shown to lie {between $1.5$ and $1.6737$. {The upper bound due to
Lu and Yu was proved by introducing a unified framework for designing truthful mechanisms~\cite{ly2008-stacs}. This} improved     Lehmann's ratio of {1.75   for Nisan and Ronen's mechanism~\cite{nr01} by 0.0763.} Later, Lu and Yu
\cite{ly08} provided an improved ratio of 1.5963, {whose proof} unfortunately is incorrect as shown in this paper later in Section~\ref{LuYu}.  Dobzinski and Sundararajan \cite{ds08} and Christodoulou et al. \cite{ckv08} independently showed that any monotonic allocation algorithm for two machines with a finite approximation ratio is {\em weakly task-independent}, meaning that, for any task, its  allocation    does not change as long as none of its {own} processing time  on machines changes. The weak task-independence is strengthened to be a {\em strong} one if the random variables $x_{ij}$ output by the allocation algorithm are independent between different tasks \cite{l09}.

In this paper, we use the concept of Copula to address the correlations among random outputs of the allocation algorithm under Lu and Yu's framework~\cite{ly2008-stacs}. Our main contribution
is to offer a Copula-based generic randomized mechanism  {for two-machine scheduling} with approximation ratio within $[1.5852, 1.58606]$,  reducing the existing best upper bound \cite{ly2008-stacs} by more than $0.0876$. 
 Moreover, we provide  an upper
bound of $1.5067711$ for the two-machine two-task case, which  improves upon the previous 1.5089 bound given in \cite{l09} and is almost tight in view of the lower bound of 1.506  for the so called scale-free monotonic allocation algorithm \cite{l09}.

To our best knowledge, we are unaware of any extant work on the explicit usage of the concept of Copula in the design and analysis of approximation {algorithms}. We hope that {techniques} like this one will also prove useful in solving other problems in the future.

The rest of the paper is organized as follows: We present the Copula-based generic randomized mechanism in Section~\ref{generic}. We then analyze the mechanism for  strongly independent tasks and weakly independent tasks  
in Section~\ref{indep} and Section~\ref{dep} respectively.  Finally, we conclude the paper with some remarks on our choice of Copula in Section \ref{sec:conclude}. The omitted details   can be found in Appendix.

\section{A generic randomized mechanism based on copula}\label{generic}
Given any real $\alpha$, we use $\alpha^+$ to denote the nonnegative number $\max\{0,\alpha\}$.
Let $F:\mathbb R_+\rightarrow[0,1]$ be a non-decreasing function satisfying $F(0)=0$ and $\lim_{x\rightarrow\infty}F(x)=1$. Write $\bar F(x)$ for $1-F(x)$. Let $ X_1, X_2,\ldots, X_n$ be $n$ dependent random variables with joint distribution function $\text{Pr}( X_1\le x_1, X_2\le x_2,\ldots, X_n\le x_n)$ given by the Clayton Copula
\begin{align}\label{defG}
G(x_1,x_2,\ldots,x_n)
=&\left[\left(\sum_{i=1}^n\sqrt[n-1]{F(x_i)}-n+1\right)^+\right]^{n-1}.
\end{align}
 It is easy to see that for any $1\le i<j\le n$, the joint distribution of $X_i$ and $X_j$ is given by
 \begin{align}
 H(x_i,x_j)&=\,G(\infty,\ldots,\infty,x_i,\infty,\ldots,\infty,x_j,\infty,\ldots,\infty)= \left[\left( \sqrt[n-1]{  F(x_i)}+\sqrt[n-1]{  F(x_j)}-1\right)^+\right]^{n-1} \,.\label{dependent}
 \end{align}
We also study the independent distribution for which
\begin{gather}
\text{$G(x_1,x_2,\ldots,x_n)=\prod_{i=1}^nF(x_i)$ and $H(x_i,x_j)=F(x_i)F(x_j)$.}\label{ind}
\end{gather}

Using a joint distribution satisfying Clayton's Copula in (\ref{defG}) or
the independence condition in (\ref{ind}) gives the following   specification of the randomized allocation algorithm introduced by Lu and Yu \cite{ly08}.

\medskip
 \hrule
  \begin{algorithm}\label{alg1}
 {\sc Input}: A processing time matrix  $t\in\mathbb R_+^{2\times n}$.
  \\{\sc Output}: A randomized allocation $x\in\{0,1\}^{2\times n}$.
  \begin{enumerate}
    \item Choose random variables $  X_1,  X_2, \ldots, X_n$ according to distribution function $G$
    \vspace{-2mm}\item For each task $j=1,2,\ldots,n$ do
   \vspace{-2mm} \item  \hspace{3mm} if $t_{1j}/t_{2j}<X_j$ then $x_{1j}\leftarrow 1$ else $x_{1j}\leftarrow 0$
    \vspace{-2mm}\item  \hspace{3mm} $x_{2j}\leftarrow 1-x_{1j}$
   \vspace{-2mm} \item End-for
   \vspace{-2mm} \item Output $x$
\end{enumerate}
  \hrule
  \end{algorithm}

\medskip

Let real function $\varphi:\mathbb R_+\times\mathbb R_+\rightarrow\mathbb R$ be defined by
\begin{eqnarray}
\varphi(x,y)=1+y-\min\left\{1,1-\frac1x+y\right\}F(x)-yF(y)+\min\left\{1+\frac1x,1+y\right\}H(x,y)
\label{phi}
\end{eqnarray}

\begin{theorem}\label{th:ratio}
 The approximation ratio of Mechanism \ref{alg1} is at most $\max\{\varphi(x,y):x,y\in\mathbb R_+\}$.
\end{theorem}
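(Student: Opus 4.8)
The plan is to bound the expected makespan of Mechanism~\ref{alg1} on an arbitrary instance, to argue that a suitable two-task instance is worst, and then to read off $\varphi$ from an explicit four-outcome computation.

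First I would normalise so that the optimal makespan equals $1$ and fix an optimal schedule; let $A$ (respectively $B$) be its set of tasks assigned to machine~$1$ (respectively machine~$2$), so that $\sum_{j\in A}t_{1j}\le 1$, $\sum_{j\in B}t_{2j}\le 1$, and $\min\{t_{1j},t_{2j}\}\le 1$ for every task~$j$. Put $\rho_j=t_{1j}/t_{2j}$. By step~3 of Mechanism~\ref{alg1}, task~$j$ is sent to machine~$1$ exactly when $X_j>\rho_j$; since~(\ref{defG})--(\ref{ind}) give $F$ as the marginal distribution of $X_j$, this event has probability $\bar F(\rho_j)$, and for any two tasks $j,k$ the joint law of the two ``which-machine'' indicators is determined by $F(\rho_j)$, $F(\rho_k)$ and $H(\rho_j,\rho_k)$. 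Writing $L_1,L_2$ for the random loads and $L_i^S$ ($S\in\{A,B\}$, $i\in\{1,2\}$) for the contribution to $L_i$ of the tasks in~$S$, I would record the deterministic inequality
\begin{equation*}
\max\{L_1,L_2\}=\max\{L_1^A+L_1^B,\,L_2^A+L_2^B\}\ \le\ \max\{L_1^A,L_2^B\}+\max\{L_1^B,L_2^A\},
\end{equation*}
in which $\max\{L_1^A,L_2^B\}\le 1$ by feasibility of the optimal schedule. The point of this split is that the makespan is always controlled by one quantity supported on $A$ together with one supported on $B$, so the bound depends on the random seed only through the two-dimensional marginals~(\ref{dependent}) (or~(\ref{ind})), never through the higher-order structure of the copula.

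The crux is then to show that $\sup_t\mathbf{E}[\max\{L_1,L_2\}]/\mathrm{OPT}(t)$ is attained, or approached, by an instance with exactly one task in $A$ and one in $B$, each pushed to the boundary of its constraint: the $A$-task with $t_1=1$ and ratio $x$ (hence $t_2=1/x$), the $B$-task with $t_2=1$ and ratio $y$ (hence $t_1=y$). For this I would invoke convexity of $\max$ together with a mean-preserving-spread/exchange argument: collapsing the tasks on one side into a single two-valued contribution while preserving the relevant marginal expectations, and then enlarging that contribution up to the feasibility bound, can only increase $\mathbf{E}[\max\{\cdot,\cdot\}]$, and by the previous paragraph no higher-order copula information is lost in doing so. Carrying this reduction out rigorously while respecting $\sum_{A}t_{1j}\le 1$, $\sum_{B}t_{2j}\le 1$ and $\min\{t_{1j},t_{2j}\}\le 1$ simultaneously, and checking that nothing beyond the pairwise law $H$ is needed, is the one genuinely non-routine step; everything else is bookkeeping.

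Finally, on the reduced two-task instance, with $a$ the $A$-task and $b$ the $B$-task, the pair of indicators $(U,V)=(\mathbf{1}[X_a>x],\mathbf{1}[X_b>y])$ takes the values $(0,0),(1,0),(0,1),(1,1)$ with probabilities $H(x,y)$, $F(y)-H(x,y)$, $F(x)-H(x,y)$ and $1-F(x)-F(y)+H(x,y)$, and in those four cases the makespan equals $1+\tfrac1x$ (both tasks on machine~$2$), $1$ (the optimal allocation), $\max\{y,\tfrac1x\}$ (the two tasks swapped), and $1+y$ (both tasks on machine~$1$), respectively. Summing these products and simplifying reproduces exactly the function $\varphi(x,y)$ of~(\ref{phi}); the two minima appearing in $\varphi$ are precisely the resolution of the case split $y\lessgtr\tfrac1x$ hidden in the $\max\{y,\tfrac1x\}$ term of the swapped outcome. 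Taking the supremum over $x,y\in\mathbb R_+$ then gives the bound $\max\{\varphi(x,y):x,y\in\mathbb R_+\}$ on the approximation ratio of Mechanism~\ref{alg1}.
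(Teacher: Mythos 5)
Your closing four-outcome computation on the two-task instance is correct and reproduces exactly the algebra that constitutes the paper's own proof (there it appears as the identity $\rho_{jk}=\varphi(r_j,r_k)$). But the paper never reduces a general instance to a two-task instance: it \emph{cites} Lu and Yu \cite{ly08} for the statement that on any instance with optimum $1$ the expected makespan is at most $\max_{j,k}\rho_{jk}$, a bound involving only the pairwise probabilities $\Pr(x_{1j}=1)$, $\Pr(x_{2j}=1,x_{1k}=1)$, $\Pr(x_{2j}=1,x_{2k}=1)$, and then only does the computation you did. So the step you yourself flag as ``the one genuinely non-routine step'' is precisely the content you would have to prove, and your sketch does not establish it. Concretely, the inequality $\max\{L_1,L_2\}\le\max\{L_1^A,L_2^B\}+\max\{L_1^B,L_2^A\}\le 1+\max\{L_1^B,L_2^A\}$ throws away the load a machine sheds when one of ``its'' optimal tasks is misallocated, and that discarded term is exactly why the swapped outcome in your table costs $\max\{y,\tfrac1x\}$ rather than $1+\max\{y,\tfrac1x\}$. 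If you carry your decomposition through (it does reduce to a single pair, by convexity in the task sizes), the pair value is $1+\mathbf{E}\bigl[\max\{y\,\mathbf 1[X_b>y],\tfrac1x\,\mathbf 1[X_a\le x]\}\bigr]=\varphi(x,y)+F(x)-H(x,y)$, strictly weaker than $\varphi(x,y)$ in general. Hence the decomposition cannot justify the assertion that ``no higher-order copula information is lost'': the exact expected makespan of an $n$-task instance genuinely depends on the full joint law of $X_1,\dots,X_n$, and the cited Lu--Yu theorem is exactly the nontrivial claim that it is nevertheless dominated by the worst pairwise configuration.

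The ``mean-preserving-spread/exchange'' gesture does not fill this hole either: collapsing several tasks of $B$ into one task cannot in general preserve the expected contributions to both machines simultaneously, since a single task's allocation probability is tied to its ratio through $F$ and its allocation moves both loads in opposite directions, so convexity of the max alone does not show $\mathbf{E}[\max\{L_1,L_2\}]$ increases under the collapse; and ``enlarging up to the feasibility bound'' alters the very means you proposed to preserve. A repair that actually works is to keep the shed loads: $L_1\le 1-\sum_{j\in A}t_{1j}x_{2j}+\sum_{k\in B}r_k t_{2k}x_{1k}$ and symmetrically for $L_2$, so the makespan is at most $1+\max\{U,V\}$ with $U,V$ linear in the weight vectors $(t_{1j})_{j\in A}$ and $(t_{2k})_{k\in B}$; maximizing the convex function $\mathbf{E}[\max\{U,V\}]$ over the two weight simplices puts all weight on a single pair $(j,k)$, where the bound evaluates exactly to $\varphi(r_j,r_k)$ for an arbitrary joint law of the $X_j$'s --- this is the statement the paper imports from \cite{ly08}. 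Finally, a small but real slip: for your reduced instance with $x>1>y$ its own optimum is $\max\{y,\tfrac1x\}<1$, so its approximation ratio exceeds $\varphi(x,y)$; the reduction must keep comparing the modified expected makespan against the original optimum $1$, rather than asserting the worst-case \emph{ratio} is attained by such boundary two-task instances.
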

\begin{proof}
 For every $j\in [n]$, let $r_j=t_{1j}/t_{2j}$. It has been shown by Lu and Yu \cite{ly08} that the approximation ratio of Mechanism \ref{alg1}  is bounded above by $\max\{\rho_{jk}:j,k\in[n]\} $, where for every pair of distinct indices $i,j\in[n]$,
  \begin{align*}
\rho_{jk} =&\,Pr(x_{1j}=1)+r_k\cdot Pr(x_{1k}=1)+(1/r_j-r_k)^+\cdot Pr(x_{2j}=1,x_{1k}=1)\\
&+(1+1/r_j)\cdot Pr(x_{2j}=1,x_{2k}=1)\,.
\end{align*}
Notice that $X_j\le r_j\Leftrightarrow x_{1j}=0\Leftrightarrow x_{2j}=1$. Hence
\begin{align*}
\rho_{jk} =&\,Pr(X_j> r_j)+r_k\cdot Pr(X_k> r_k)+(1/r_j-r_k)^+\cdot Pr(X_j\le r_j,X_k>r_k)\\
&+(1+1/r_j)\cdot Pr(X_j\le r_j,X_k\le r_k )\\
=&\,\bar F(r_j)+r_k\cdot\bar F(r_k)+(1/r_j-r_k)^+\cdot(F(r_j)-H(r_j,r_k))+(1+1/r_j)\cdot H(r_j,r_k)\\
=&\,1+r_k-\left(1-(1/r_j-r_k)^+\right)F(r_j)-r_kF(r_k)+\left(1+1/r_j-(1/r_j-r_k)^+\right)H(r_j,r_k)\\
=&\,1+r_k-\min\{1,1-1/r_j+r_k\}F(r_j)-r_kF(r_k)+\min\{1+1/r_j,1+r_k\}H(r_j,r_k)
\end{align*}
shows that $\rho_{jk}=\varphi(x_j,x_k)$.
\end{proof}

\section{Strongly independent tasks}\label{indep}
In this section, we consider tasks being allocated strongly independently. Therefore, the joint distribution takes the form $H(x,y)=F(x)F(y)$, giving
\begin{eqnarray}
\varphi(x,y)=1+y-\min\left\{1,1-\frac1x+y\right\}F(x)-yF(y)+\min\left\{1+\frac1x,1+y\right\}F(x)F(y),
\label{independent}
\end{eqnarray}
from which the following symmetry can be proved by elementary mathematics.
\begin{lemma}\label{=}
Let distribution function $G$ satisfy \eqref{ind}. If $F(x)=1-F(1/x)$ for any $x\ge0$, then $\varphi(x,y)=\varphi(1/y,1/x)$ for any $x,y\in\mathbb R_+$.\qed
\end{lemma}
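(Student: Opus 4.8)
\medskip

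The plan is a direct substitution into \eqref{independent}, followed by careful bookkeeping of its three $\min$-terms. I would abbreviate $a=F(x)$, $b=F(y)$ and $u=1/x$; the hypothesis $F(t)=1-F(1/t)$ then gives $F(1/x)=1-a$ and $F(1/y)=1-b$. Substituting $x\mapsto 1/y$ and $y\mapsto 1/x$ in \eqref{independent} --- so that the ``$1/x$'' slot becomes $y$ while the two function-argument slots become $1/y$ and $1/x$ --- yields
\begin{align*}
\varphi(1/y,1/x)=1+u-\min\{1,1-y+u\}(1-b)-u(1-a)+\min\{1+y,1+u\}(1-a)(1-b).
\end{align*}
Expanding $(1-a)(1-b)=1-a-b+ab$ and collecting the coefficients of $1$, $a$, $b$ and $ab$, I would rewrite this as
\begin{align*}
\varphi(1/y,1/x)=\bigl(1+M-Q\bigr)+(u-M)\,a+(Q-M)\,b+M\,ab,
\end{align*}
where $P=\min\{1,1-u+y\}$, $Q=\min\{1,1-y+u\}$ and $M=\min\{1+u,1+y\}=1+\min\{u,y\}$; note that \eqref{independent} itself already reads $\varphi(x,y)=1+y-Pa-yb+Mab$.

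The remaining task is the two elementary identities
\begin{align*}
M-u=1-(u-y)^+=P\qquad\text{and}\qquad M-Q=\min\{u,y\}+(y-u)^+=y,
\end{align*}
both immediate from $\min\{1,1-s\}=1-s^+$, valid for every real $s$ (alternatively, one splits into the cases $y\le 1/x$ and $y\ge 1/x$, in each of which all three minima resolve to explicit affine expressions). Granting them, the constant term of $\varphi(1/y,1/x)$ is $1+(M-Q)=1+y$, the coefficient of $a$ is $u-M=-P$, the coefficient of $b$ is $Q-M=-y$, and the coefficient of $ab$ is $M$; these match $\varphi(x,y)=1+y-Pa-yb+Mab$ term by term, which gives $\varphi(1/y,1/x)=\varphi(x,y)$.

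I do not expect a real obstacle here, only a pitfall to avoid: when $1/x$ is large, $1-1/x+y$ can be negative, so one must resist writing $P=1-1/x+y$; the sign-robust form $\min\{1,1-s\}=1-s^+$ handles all signs uniformly, and every term of \eqref{independent} is a well-defined real number regardless of sign. The boundary situations $x\to 0$ or $y\to\infty$ are consistent with the conventions $F(0)=0$ and $\lim_{x\to\infty}F(x)=1$ underlying \eqref{ind}.
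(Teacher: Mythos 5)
Your proof is correct: the substitution into \eqref{independent}, the use of $F(1/x)=1-F(x)$, and the two identities $M-u=P$ and $M-Q=y$ (via $\min\{1,1-s\}=1-s^+$) all check out, and this is exactly the kind of elementary verification the paper has in mind, since it states Lemma \ref{=} without proof as following ``by elementary mathematics.'' If anything, your sign-robust handling of the $\min$ terms is slightly cleaner than the route the paper takes for the analogous Lemma \ref{simplify}, where it splits into the cases $xy\ge1$ and $xy\le1$.
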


In Section \ref{LuYu}, we point out a mistake {of} \cite{ly08} in estimating over a transcendental function, which invalidates the ratio 1.5963 claimed. In Section \ref{sec-our}, we
introduce
an algebraic piecewise function to construct a class of joint distributions of
independent random variables. Then, we prove that using this class of independent distributions in Algorithm \ref{alg1} gives an improved ratio 1.58606. In Section \ref{sec:limit}, we show the limitation of Algorithm \ref{alg1} for strongly independent tasks, from which { no} ratio better than 1.5852 can be expected.

\subsection{Lu and Yu's transcendental function}\label{LuYu}
  Lu and Yu \cite{ly08} considered function $F(x)=1-\frac1{2^{x^{2.3}}}$. For any $\alpha_1,\alpha_2\in\mathbb R_+$, let $\beta_1=F(\alpha_1)$ and $\beta_2=F(1/\alpha_2)$. By Theorem 4 and in particular the instance on page 410 of  \cite{ly08}, Lu and Yu's mechanism has approximation ratio at least
\begin{align*}
\theta(\alpha_1,\alpha_2)=&\,(1+\alpha_2)\beta_1\beta_2+\beta_1(1-\beta_2)+(1+\alpha_1)(1-\beta_1)(1-\beta_2)+\max\{\alpha_1,\alpha_2\}\beta_2(1-\beta_1)\\
=&\,\left\{\begin{array}{ll}(1+\alpha_2)\beta_1\beta_2+\beta_1(1-\beta_2)+(1+\alpha_1)(1-\beta_1)(1-\beta_2)+ \alpha_2 \beta_2(1-\beta_1),&\text{ if }\alpha_2\ge\alpha_1\,;\vspace{1mm}\\
(1+\alpha_2)\beta_1\beta_2+\beta_1(1-\beta_2)+(1+\alpha_1)(1-\beta_1)(1-\beta_2)+ \alpha_1 \beta_2(1-\beta_1),&\text{ if }\alpha_1\ge\alpha_2\,.\end{array}\right.
\end{align*}
They claimed in Theorem 5 of  \cite{ly08} that under this $F(x)$, $\theta(\alpha_1,\alpha_2)\le1.5963$. However, a contradiction is given by
\begin{align*}
&\theta(0.87793459260323,2.09409917605545
)=1.64065136465694\,.
\end{align*}
  In view of this, the previously best known approximation ratio for truthful scheduling on two unrelated machines was 1.6737 in Lu and Yu's earlier conference paper~\cite{ly2008-stacs}. 
  In this paper, we reduce the ratio to 1.58606 by defining $F$ to be a piecewise algebraic function.

\subsection{An algebraic piecewise function} \label{sec-our}
The challenging task in implementing Mechanism \ref{alg1} is the selection of distribution function $G$. In the case of strongly independent tasks, it amounts to selecting function $F$ such that the maximum of $\varphi$ is as small as possible. To the best of our knowledge, the functions studied in previous work for multiple tasks are either noncontinuous or non-algebraic \cite{ly2008-stacs,ly08,nr01}. In this subsection, we show that the combination of continuity and simple algebraic form beats previous functions, giving improved approximation ratios.

Suppose that  $a\in[1.7,3]$ and $b\in[0.7,1]$ are constants. We study the following {\em continuous} piecewise {\em algebraic} function
\begin{align}\label{ourformula}
F(x)=\left\{\begin{array}{ll}
1,&x\in I_1=[a,+\infty),\vspace{1mm}\\
1-\frac{2(1-b)(a-x)}{a-1},&x\in I_2=[\frac{a+1}2,a),\vspace{1mm}\\
\frac12+\frac{(2b-1)(x-1)}{a-1} ,&x\in I_3=[1,\frac{a+1}2),\vspace{1mm}\\
\frac12-\frac{(2b-1)(1/x-1)}{a-1} ,&x\in I_4=[ \frac2{a+1},1),\vspace{1mm}\\
\frac{2(1-b)(a-1/x)}{a-1},&x\in I_5=[\frac1a, \frac2{a+1}),\vspace{1mm}\\
0,&x\in I_6=[0,\frac1a),
\end{array}
\right.
\end{align}
where the five {\em demarcation   points} $\frac1a$, $\frac2{a+1}$, 1, $\frac{a+1}2$, $a$ divide the domain $[0,+\infty)$ into six {\em intervals} $I_1, I_2,\ldots, I_6$. The function $F(\cdot)$, when plugged into (\ref{ind}), gives an improvement 0.08764 over the previous best ratio of 1.6737 \cite{ly2008-stacs}. Notice that $F(\cdot)$ enjoys the property that  \begin{equation}\label{symmetric}
F(x)+F(1/x)=1\text{ for any }x\ge0.
\end{equation}
An immediate corollary is $F(1)=0.5$.

\begin{theorem} \label{th:independent}
Let $a=1.715$ and $b=0.76$.
Using $F(x)$ in (\ref{ourformula}) 
and $G(x_1,x_2,\ldots,x_n)$ in (\ref{ind}), Mechanism \ref{alg1} achieves approximation ratio $1.58606$.
\end{theorem}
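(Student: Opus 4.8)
By Theorem~\ref{th:ratio}, with $H(x,y)=F(x)F(y)$ and $F$ as in \eqref{ourformula}, it suffices to show that $\varphi(x,y)\le 1.58606$ for all $x,y\in\mathbb R_+$, where $\varphi$ is given by \eqref{independent}. The plan is a direct, if tedious, piecewise maximization. First I would exploit the symmetry: since $F$ satisfies \eqref{symmetric}, Lemma~\ref{=} gives $\varphi(x,y)=\varphi(1/y,1/x)$, which roughly halves the number of region-pairs that must be examined. Next I would reduce the domain further: for $x\ge a$ we have $F(x)=1$, and for $x\le 1/a$ we have $F(x)=0$, so on these extremal pieces $\varphi$ collapses to an explicit elementary expression in the remaining variable (and in $x$ or $1/x$), whose supremum can be bounded by hand or by a one-variable calculus argument; one checks the sup is not attained there, or is below the target. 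This confines the real work to $x,y$ ranging over the four ``active'' intervals $I_2,\dots,I_5$, where $F$ is an explicit linear (hence algebraic) function of $x$ or $1/x$.

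The core step is then: on each of the $\binom{4}{\phantom{0}}$-ish pairs of active intervals (cut down by symmetry), substitute the linear formulas for $F(x)$, $F(y)$ and resolve the two $\min\{\cdot,\cdot\}$ terms in \eqref{independent} into their cases by comparing $1$ with $1-1/x+y$ and $1+1/x$ with $1+y$ (i.e. comparing $1/x$ with $y$). This turns $\varphi$ into finitely many rational/algebraic functions of $(x,y)$ on closed boxes. For each such piece I would (i) bound it on the interior via the stationarity conditions $\partial_x\varphi=\partial_y\varphi=0$ — these are low-degree equations because $\varphi$ is at worst a product of two linear-in-$x$-or-$1/x$ factors plus linear terms — and (ii) bound it on the boundary faces, which are one-variable problems of the same flavor. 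Plugging in $a=1.715$, $b=0.76$ makes every coefficient explicit, so each candidate critical value is a concrete number to be checked against $1.58606$.

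The main obstacle I anticipate is bookkeeping rather than depth: there are many region-pairs and many sub-cases from the two minima, and the substitution of $1/x$ for the variable on $I_4,I_5$ means some pieces are genuinely rational (not polynomial), so the stationarity equations, while low-degree, are mildly unpleasant. A secondary subtlety is verifying that the claimed value $1.58606$ is actually attained (so the bound is tight as stated, not merely an upper bound): I would identify the maximizing pair $(x^\ast,y^\ast)$ — by symmetry one expects it on a boundary between pieces or at a demarcation point such as $x=(a{+}1)/2$ or $y=1/x$ — and exhibit the corresponding value, confirming equality. I expect the verification of the finitely many explicit inequalities, and the matching lower instance, to be routine calculus deferred to the Appendix; the conceptual content is entirely in the reduction via symmetry and the extremal pieces $I_1,I_6$.
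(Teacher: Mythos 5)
Your proposal follows essentially the same route as the paper's proof: after the symmetry reduction via Lemma~\ref{=} to the half-domain $xy\ge1$ (on which the two minima in \eqref{independent} collapse to $1$ and $1+\frac1x$), the paper carries out exactly this interval-by-interval analysis of critical points and boundary/demarcation values with $a=1.715$, $b=0.76$, the bottleneck being $\varphi\bigl(\tfrac{a+1}2,\tfrac{a^2+a+ab+3b}{2a+2}\bigr)\doteq1.586058<1.58606$ in Case~3.1. Note only that the theorem is an upper-bound claim, so your extra step of exhibiting a matching instance attaining $1.58606$ is unnecessary (and in fact the paper's maximum falls slightly below that value); the tight lower bound for the mechanism is treated separately in Theorem~\ref{th:limit}.
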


\begin{proof}
By Theorem \ref{th:ratio}, it suffices to show that the maximum of $\varphi(x,y)$   in (\ref{independent}) is no more than {$\rho^*=1.58606$}.
By (\ref{symmetric}) and Lemma \ref{=}, we may assume $xy\ge 1$, for which the function $\varphi(x,y)$  to be maximized takes the form of
\begin{gather}\label{xy>=1}
\varphi(x,y)=1+y- F(x)-yF(y)+\left(1+\frac1x\right)F(x)F(y)\,.
\end{gather}
Note that $\varphi(x,y)$ is continuous in $\mathbb{R}_{+} \times \mathbb{R}_+$. Suppose $x^*,y^*\in\mathbb R_+$ with $x^*y^*\ge1$ attains the maximum, i.e.,
$(x^*,y^*)\in\arg \max_{xy\ge1}\varphi(x,y)$.

We will show that $ \varphi(x^*,y^*)<\rho^*$ by considering the different
 possible domains of the variables $x$ and, $y$ 
 in a case by case basis.
When  $x^*$ or $y^*$ does not belong to the domain associated with a given case,
we say that $(x^*,y^*)$ does not {\em belong to the case}. We will show that, for any case $x\in I_i$, $y\in I_j$ ($1\le i,j\le6 $)  to which  $(x^*,y^*)$ may belong,  $\varphi(x,y)$ is smaller than $\rho^*$ by upper bounding its value
at critical points {(i.e., when the derivatives of
 $\varphi(x,y)$ are equal to zero)} and {that} at demarcation points.

\paragraph{\sc Case 1.} $x\ge a$. It follows  from (\ref{ourformula}) that $F(x)=1$ and from (\ref{xy>=1}) that
$\varphi(x,y) = y+(1+\frac1x-y)F(y)$. {If} $y\le1+\frac1x$ or $y\ge a$, {then} $\varphi(x,y)\le y+(1+\frac1x-y) =1+\frac1x\le1+\frac1a<1.584$.
 If $y>1+\frac1x$ and $y< a$, {then} $y\in(1,a)$. 
 
In case of $y\in[\frac{a+1}2,a)$, since
   $\frac{\partial\varphi}{\partial x}(x,y)=-\frac{2400y-541}{3575x^2} <0$, from KKT condition, we deduce that $(x^*,y^*)$ does not belong to {this case} 
  unless $x^*=a$. When
  $x=a$, {note that $\varphi\left(a,\frac{a+1}2\right)<1.53$, and that}   $\varphi(a,y)$ has a unique critical point $y=\frac{a^2+a+1}{2a}
$ in $(\frac{a+1}2,a)$ with corresponding critical value less than 1.58602. 

In case of $y\in(1,\frac{a+1}2)$, it suffices to consider the case where $x=a$ as
$\frac{\partial\varphi}{\partial x}(x,y)=-\frac{16y-5}{22x^2}<0$. Note that  $\frac{\partial\varphi}{\partial y}(a,y)=\frac{17949}{7546}-\frac{16}{11}y>2.3-\frac{16}{11}(\frac{a+1}2) >0$, which excludes the possibility of $(x^*,y^*)$ belonging to {this case.} 

\paragraph{\sc Case 2.} $y\ge a> x\ge 0$. Note that
$\varphi(x,y)=1+y- F(x)-y +(1+\frac1x)F(x)=1+\frac{F(x)}x$ is a function of single variable $x$. It is easy to check that the derivative of $1+\frac{F(x)}x$ is positive for all $x\in(\frac1a,a)-\{\frac2{a+1},1,\frac{a+1}2\}$. The continuity of $\varphi$ implies $\varphi(x,y)\le \varphi(a,y)=1+\frac1a<1.584$ for all $x\in (\frac1a,a)$.
  When $x\in(0,\frac{1}a]$, it is clear that $\varphi(x,y)=1$.

\bigskip
Cases 1 and 2 above show that {$\varphi(x,y)<\rho^*$} 
when $x $ or $y$ belongs to $I_1$.
For the remaining cases, we have  $x,y< a$. As $xy\ge1$, we have $x,y>\frac1a$ both contained in  $(\cup_{i=2}^4I_i)\cup(I_5-\{\frac1a\})$. We distinguish among Cases 3 -- 6, where Case $i+1$ deals with for $x\in I_i$, $i=2,3,4$ and Case 6 deals with $x\in I _5-\{\frac1a\}$. 

\paragraph{\sc Case 3.} $x\in I_2=[\frac{a+1}2,a)$. We distinguish among four subcases for $y\in[\frac{a+1}2,a)$, $y\in[1,\frac{a+1}2)$, $y\in[\frac2{a+1},1)$, and $y\in(\frac{1}{a},\frac{2}{a+1})$, respectively.

\medskip\noindent{\sc Case 3.1.} $y\in[\frac{a+1}2,a)$. 
In case of $x,y\in(\frac{a+1}2,a)$, solving $\frac{\partial\varphi}{\partial x}(x,y)=0=\frac{\partial\varphi}{\partial y}(x,y)$ gives
$ \frac{987.84x^2+29.2681}{576x^2-129.84}$ $=y=\frac{2400x^2+7990.125x-541}{7150x}$, which implies
{$2764800x^4-4921488x^3+1656341.83x-140486.88=0$.} Among the four real roots of the biquadratic equation, only one root $x_0\doteq1.5419$ belongs  $I_2=[\frac{a+1}2,a)$. 
So   function $\varphi(x,y)$ has a  unique critical point $(x_0,y_0)$, where $y_0=\frac{987.84x_0^2+29.2681}{576x_0^2-129.84}\doteq1.586
$, giving critical  value $\varphi(x_0,y_0)<1.585
$.

In case of $x=\frac{a+2}2$, function $\varphi(\frac{a+1}2,y)$ has a unique critical point $y_0=\frac{a^2+a+ab+3b}{2a+2}\doteq1.5174$ in $ (\frac{a+1}2,a)$,  giving  critical value smaller than $ 1.586059$. Note that $\varphi(\frac{a+1}2,\frac{a+1}2)<1.57$. 

In case of $y=\frac{a+1}2$ and $x\in(\frac{a+1}2,a)$, the derivative of  $\varphi(x,\frac{a+1}2)$ is $\frac{10279}{89375x^2}-\frac{576}{3575}<\frac{10279}{89375 }-\frac{576}{3575}<0$, saying that $(x^*,y^*)$ does not belong to this case.

\medskip\noindent{\sc Case 3.2.} $y\in[1,\frac{a+1}2)$. {Similar arguments to that in Case 3.1 show the following: In case of $x\in(\frac{a+1}2,a)$ and  $y\in(1,\frac{a+1}2)$, function $\varphi$ attains its critical value $\varphi(x_0,y_0)<1.583
$ at  $x_0\doteq1.5249 $, $y_0= \frac{1053x_0^2+43.95625}{624x_0^2+140.66}\doteq1.566
$. In case of $x=\frac{a+1}2$, 
function $\varphi$ attains its critical value $\varphi(\frac{a+1}2,y_0)<1.585
$ at  $y_0=\frac{a^2-3+2ab-2b-2a+4ab^2+12b^2}{4(a+1)(2b-1)}\doteq1.5037
$; at the boundary, $\varphi(\frac{a+1}2,1)<1.4$.
In case of $y=1$ and $x\in(\frac{a+1}2,a)$,   $(x^*,y^*)$ does not belong to this case.}

\medskip\noindent{\sc Case 3.3.} $y\in[\frac2{a+1},1)$.
If $\frac{\partial\varphi}{\partial x}(x,y)=0$, then  $x^2
=\frac{5.41}{24}\left(\frac{70.4}{16-5y}-5.4\right)<\frac{5.41}{24}\left(\frac{70.4}{16-5}-5.4\right)<0.3$, contradicting the hypothesis $x\in[\frac{a+1}2,a)$ of Case 3. Thus $\frac{\partial\varphi}{\partial x}(x,y)\ne 0$, and it suffices to consider the case where $x=\frac{a+1}2$. Note that the derivative of $\varphi(\frac{a+1}2,y)$ is $\frac{143336}{149325y^2}-\frac{5}{22}>\frac{143336}{149325}-\frac{5}{22}>0$. We deduce that  $(x^*,y^*)$ does not belong to Case 3.3.

\medskip\noindent{\sc Case 3.4.} $y\in(\frac1a,\frac2{a+1})$. {It can be shown that $(x^*,y^*)$ does not belong to this case by arguments similar to that in Case 3.3.} 

\paragraph{\sc Case 4.} $x\in I_3=[1,\frac{a+1}2)$. It follows from $xy\ge1$ that $y>\frac2{a+1}$ for which we distinguish among three {subcases} for $y\in[\frac{a+1}2,a)$, $y\in[1,\frac{a+1}2)$ and $y\in[\frac2{a+1},1)$, respectively.

In case of $y\in[\frac{a+1}2,a)$, for $x\in(1,\frac{a+1}2)$ and $y\in(\frac{a+1}2,a)$, {function $\varphi$ attains   critical value $\varphi(x_0,y_0)<1.5854
$ at  the unique critical point $(x_0,y_0)$, where $x_0\doteq1.2027 $ and $y_0=\frac{26754x_0^2 +{35165}/{32}}{15600x_0^2+4875}\doteq1.4504
$.}
Note that $\varphi(1,\frac{a+1}2)=1.5858$. For $x=1$ and $y\in(\frac{a+1}2,a)$, the derivative  of $\varphi(1,y)$ is  negative.
For $y=\frac{a+1}2$ and $x\in(1,\frac{a+1}2)$, the derivative of $\varphi(x,\frac{a+1}2)$ is negative. {It follows that $(x^*,y^*)$   belongs to neither of the two cases.}

In case of $y\in[1,\frac{a+1}2)$,
if $\frac{\partial\varphi}{\partial x}(x,y)=0$, then  $x^2=\frac{6.875}{27-16y}-\frac5{16}<\frac{6.875}{27-8(a+1)}-\frac5{16}
<1
$, {contradicting  the hypothesis $x\ge1$  of Case 4}. So {it suffices to consider} $x=1$. Within $y\in(1,\frac{a+1}2)$, function $\varphi(1,y)$ attains its unique critical value $\varphi(1,\frac{43}{32})<1.586$ at $y=\frac{43}{32}
$. At the boundary, we have $\varphi(1,1)=1.5$.

In case of $y\in(\frac2{a+1},1)$, if $\frac{\partial\varphi}{\partial x}(x,y)=0$, then  $x^2=\frac{80-135y}{80y- 256}$, which along with $x\ge1$ enforces $y\ge\frac{336}{215}$, a contradiction to $y<1$. Therefore we may assume $x=1$. Since the derivative of $\varphi(1,y)$ is $\frac{16-5y^2}{22y^2}>0$, we deduce that $(x^*,y^*)$ does not belong to this case.

\paragraph{\sc Case 5.} $x\in I_4=[\frac{2}{a+1},1)$. It follows from $xy\ge1$ that $y>1$. We distinguish between two subcases depending on whether $y$ is at least $\frac{a+1}2$ or not.

{In case of} $y\in[\frac{a+1}2,a)$, if $\frac{\partial\varphi}{\partial y}(x,y)=0$, then   $y=\frac{2524.47 x^2+429x-624}{1716x^2}
$, which along with $y\ge\frac{a+1}2$ enforces $5x^2+11x-16\ge0$ implying $x\le-3.2$ or $x\ge1$, a contradiction to the hypothesis $x\in I_4$ of Case 5. So we may assume $y=\frac{a+1}2$. Within $x\in(\frac2{a+1},1)$, the unique critical point of $\varphi(x,\frac{a+1}2)$ is $x=\frac{608}{609}$, giving critical value less than 1.586. 
At the boundary, 
we have  \mbox{$\varphi(\frac{2}{a+1},\frac{a+1}2)<1.52
$.}

{In case of $y\in(1,\frac{a+1}2)$, when $x\in(\frac2{a+1},1)$, function $\varphi$ attains its unique critical value $\varphi(x_0,y_0)<1.58603$ at   $x_0\doteq0.985
$, $y_0=\frac{50193x_0/16+1690}{5408-1859x_0}\doteq1.3364
$.
When} $x=\frac2{a+1}$, function $\varphi(\frac2{a+1},y)$ has a unique critical value
$\varphi(\frac2{a+1},1.12665)<1.56
$. 


\paragraph{\sc Case 6.} $x\in I_5-\{\frac1a\}=(\frac1a,\frac{2}{a+1})$. It follows from $xy\ge1$ that $y\in(\frac{a+1}2,a)$,
 If $\frac{\partial\varphi}{\partial x}(x,y)=0$, then   $x=\frac{2400y-541}{637637/400+858y}
$, which along with $x\le\frac2{a+1}$ enforces $y\le  \frac{4657}{4800}<1$, a contradiction.  Since  $\frac{\partial\varphi}{\partial x}(x,y)$ is a continuous function, we deduce that $\frac{\partial\varphi}{\partial x}(x,y)$ is always positive or always negative, implying that $(x^*,y^*)$ does not belong to Case 6.

\bigskip Among all cases analyzed above (see Table \ref{tb1} for a partial summary), the bottleneck $1.58605822203599$ ($<\rho^*$) is attained by Case 3.1 with $\varphi\left(\frac{a+1}2,\frac{a^2+a+ab+3b}{2a+2}\right)=\varphi(1.3575,1.51742633517495)$.
\end{proof}

 \begin{table}[h!]
\begin{center}
\renewcommand{\arraystretch}{1.45}
\begin{tabular}{c| c | c|  c| c}
\hline
Case & Hypothesis & $x_0$  &$y_0$   &$ \varphi(x_0,y_0)$  \\ [0ex] %
 \hline

 {1}& $x\ge a>y\ge\frac{a+1}2 $  & $ a=1.715 $  & $\frac{a^2+a+1}{2a}=1.6490$ & $1.58601068358666$\\[0ex]
\hline

{3}& $x,y\in[\frac{a+1}2,a)$ & $\frac{a+1}2=1.3575$    & $\frac{a^2+a+ab+3b}{2a+2}=1.517426335
$  &1.58605822203599\\[0ex]
\hline

 & & 1.2027121359
    & 1.45036644115936  & 1.58531963915869   \\ [-1.1ex]

 \raisebox{1.3ex}{4} & \raisebox{1.3ex}{$a>y\ge \frac{a+1}2>x\ge1$} &1&$\frac{a+1}2=1.3575$ & 1.5858   \\ [-0ex]
 \hline

 {4}&$x,y\in[1,\frac{a+1}2)$& 1     & 1.34275& 1.5859375 \\[0ex]
\hline

{5}&$a>y\ge\frac{a+1}2$, $1>x\ge\frac2{a+1}$& 0.9983579639     & $\frac{a+1}2=1.3575$  &1.58580149521531 \\[0ex]
\hline

{5}& $\frac{a+1}2> y \ge1>x\ge\frac2{a+1}$  &0.98503501986    &  1.33641518393347  & 1.58602337235828
  \\ [0ex] 
\hline 
\end{tabular}
\caption{The cases in the proof of Theorem \ref{th:ratio} where $\varphi(x,y)$ exceeds 1.585.}
\label{tb1}
\end{center}
\end{table}

 \subsection{The limitation of Mechanism \ref{alg1}}\label{sec:limit}
It was announced in \cite{ly08} and proved in its full paper that, for strongly independent tasks, that the performance ratio of Mechanism \ref{alg1} cannot be better than 1.5788. We improve the lower bound by 0.0074, which nearly closes the gap between the lower and upper bounds for Mechanism \ref{alg1}.

\begin{theorem}\label{th:limit}
Let distribution function  $G$ in (\ref{ind}) be defined by  any non-decreasing function $F:\mathbb R_+\rightarrow [0,1]$ with $F(0)=0$ and $\lim_{x\rightarrow\infty}F(x)=1$. The approximation ratio of Mechanism \ref{alg1}    is at least $1.5852$.
\end{theorem}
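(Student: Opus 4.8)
By the matching lower‑bound instances of Lu and Yu~\cite{ly08} (which exhibit explicit two‑task instances showing that, under the independence assumption~(\ref{ind}), the approximation ratio of Mechanism~\ref{alg1} is \emph{at least} $\sup_{x,y\in\mathbb R_+}\varphi(x,y)$ with $\varphi$ as in~(\ref{independent})), it suffices to prove that \emph{for every} non‑decreasing $F:\mathbb R_+\to[0,1]$ with $F(0)=0$ and $\lim_{x\to\infty}F(x)=1$ there exist $x,y\in\mathbb R_+$ with $\varphi(x,y)\ge\rho:=1.5852$. The plan is to argue by contradiction: assume $\varphi(x,y)<\rho$ for all $x,y>0$, and extract a contradiction by probing $F$ at a fixed, finite set of positive reals chosen in advance.

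Two structural features of $\varphi$ drive the argument. First, for fixed $x,y$ the map $\bigl(F(x),F(y)\bigr)\mapsto\varphi(x,y)$ is \emph{bilinear}: writing $u=F(x)$, $v=F(y)$, formula~(\ref{independent}) reads $\varphi=1+y-c_1(x,y)\,u-y\,v+c_2(x,y)\,uv$, the two $\min$'s entering only as the constants $c_1,c_2$. Second, on the diagonal $y=x$ with $x\in[1,2]$ it specializes to the convex quadratic
\[
\varphi(x,x)=\Bigl(1+\tfrac1x\Bigr)u^2-(1+x)\,u+(1+x),
\]
whose minimum equals $(1+x)\bigl(1-\tfrac x4\bigr)$, attained at $u=x/2$; at $x=\tfrac32$ this minimum is $\tfrac{25}{16}=1.5625<\rho$. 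Hence the assumption $\varphi(p,p)<\rho$ confines $F(p)$, for a probe point $p$ near $3/2$, to an explicit narrow open interval around $p/2$. Letting $y\to\infty$ in $\varphi(x,y)$ yields the one‑sided bound $1+F(x)/x\le\rho$, i.e.\ $F(x)\le(\rho-1)x$; combined with a large probe point (whose $F$‑value the diagonal constraint forces close to $1$) and with the monotonicity of $F$, this produces further one‑sided bounds, so that each probed value $F(p_i)$ is pinned into a small interval $J_i$.

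It then remains to exhibit one off‑diagonal pair $(p_i,p_j)$, or a short chain of them, for which $\varphi(p_i,p_j)\ge\rho$ throughout $J_i\times J_j$ intersected with the monotonicity constraint $F(p_i)\le F(p_j)$, which contradicts the standing assumption. Since $\varphi(p_i,p_j)$ is bilinear in $\bigl(F(p_i),F(p_j)\bigr)$, its infimum over the rectangle $J_i\times J_j$ is attained at a corner, so this reduces to finitely many explicit numerical inequalities, with the monotonicity cut $F(p_i)\le F(p_j)$ used to discard the corner(s) at which the bilinear form would otherwise dip below $\rho$. The probe set itself (a handful of points in $[1,2]$, clustered roughly around $3/2$, plus one or two ``extreme'' points) is located by a numerical search, and the resulting corner inequalities are then verified rigorously, e.g.\ with interval arithmetic; this is what fixes the constant to $1.5852$, which by Theorem~\ref{th:independent} lies within about $10^{-3}$ of the true optimum for Mechanism~\ref{alg1}.

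The main obstacle is precisely the bilinear product $F(x)F(y)$ in $\varphi$: it blocks a clean LP‑duality argument and forces the interval/corner analysis above, and because the target $\rho$ sits only $\approx10^{-3}$ below the upper bound $1.58606$, the probe points must be chosen so that \emph{every} corner inequality holds with a genuine margin. A secondary subtlety is that, unlike in the proof of Theorem~\ref{th:independent}, one may not invoke the reflection symmetry $\varphi(x,y)=\varphi(1/y,1/x)$ of Lemma~\ref{=}, since the adversarial $F$ need not satisfy $F(x)=1-F(1/x)$; whenever a reciprocal relation is needed, the probe set must contain the relevant points and their reciprocals as separate members.
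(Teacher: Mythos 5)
Your strategy coincides with the paper's own: assume $\varphi(x,y)<1.5852$ for all $x,y$ (invoking Lu and Yu's lower-bound instances so that $\sup\varphi$ indeed lower-bounds the ratio of Mechanism~\ref{alg1}), probe $F$ at a finite set of points together with their reciprocals, exploit that $\varphi$ is (bi)linear in $\bigl(F(x),F(y)\bigr)$ plus the monotonicity of $F$, and propagate interval constraints on the probed values of $F$ until two of them clash. The paper executes exactly this with $\alpha=1.352$, $\beta=1.532$ and the seven pairs $(\alpha,\alpha),(\beta,\beta),(\alpha,\beta),(1,\alpha),(1/\alpha,1),(1/\beta,1/\alpha),(1/\beta,1/\beta)$: from the diagonal pairs it gets $0.54<F(\alpha)<0.81$ and $F(\beta)<\lambda_1$, then a sequential refinement through $(\alpha,\beta)$, $(1,\alpha)$, $(1/\alpha,1)$, $(1/\beta,1/\alpha)$ yields $F(1/\beta)<0.1143$, which contradicts $F(1/\beta)>0.116$ obtained from $(1/\beta,1/\beta)$.

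The gap is that your write-up never leaves the level of a plan. The probe points are described only as ``a handful of points, clustered roughly around $3/2$, located by a numerical search,'' and the decisive corner inequalities are deferred to a verification ``e.g.\ with interval arithmetic'' that is not performed. For this theorem that deferred step \emph{is} the proof: the margins are razor thin (the paper's final clash is $0.1143$ versus $0.116$, and the target $1.5852$ sits within about $10^{-3}$ of the upper bound of Theorem~\ref{th:independent}), so the existence of a probe set that closes the argument at $1.5852$ cannot be presumed and must be exhibited. Moreover, the quantitative heuristics you do supply are weaker than what is needed: a single diagonal constraint near $x=3/2$ only pins $F(x)$ to an interval of half-width roughly $0.12$ (compare $0.54<F(\alpha)<0.81$ in the paper), and neither the $y\to\infty$ bound $F(x)\le 0.5852\,x$ nor a large probe point with $F$ forced near $1$ plays a role in a working chain; what actually succeeds is a sequential refinement in which each off-diagonal constraint reuses the bounds already derived, ending in a two-sided squeeze at a reciprocal point. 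So the approach is sound and essentially the paper's, but as written the constant $1.5852$ is not established; you would need to specify the points and carry out the explicit chain of inequalities (as the paper does) to complete the proof.
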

\begin{proof}
Suppose that there exists function $F$ such that Mechanism \ref{alg1} achieves an approximation ratio   less than 1.5852. It follows from (\ref{independent}) that for any $x,y\in\mathbb R_+$,
\begin{eqnarray}\label{limit}
1.5852>\varphi(x,y)=\left\{
\begin{array}{ll}
1+y-F(x)-yF(y) +\left(1+\frac1x\right)F(x)F(y),&xy\ge1;\vspace{1mm}\\
1+y- \left(1-\frac1x+y\right)F(x)-yF(y)+(1+y)F(x)F(y),&xy\le1.
\end{array}
\right.
\end{eqnarray}
Let $\alpha=1.352$ and $\beta=1.532$. We examine $\varphi(x,y)$ for some values of $x,y$ in $\{\alpha,\beta,1, 1/{\alpha},1/{\beta}\}$, and derive a contradiction to $\varphi(x,y)<1.5852$.

First, we investigate several values $x,y\ge1$ to which the first row of (\ref{limit}) applies. It follows from
$1.5852>\varphi(\alpha,\alpha)=\frac{294}{169}(F(\alpha))^2-\frac{294}{125}F(\alpha)+\frac{294}{125} $ that
 \begin{gather}\label{r2}
0.54<F(\alpha)< \frac{169}{250}+\frac{13 \sqrt{1174}}{3500}<0.81\,.
\end{gather}
It follows from $1.5852>\varphi(\beta,\beta)=\frac{633}{383}(F(\beta))^2 -\frac{633}{250}F(\beta)+\frac{633}{250}$ that
\begin{gather}\label{r4}
F(\beta)<\frac{383}{500}+\frac{\sqrt{154595269}}{105500}=\lambda_1\,.
\end{gather}
It follows from  $1.5852>\varphi(\alpha ,\beta )=\frac{294}{169}\cdot F(\alpha )\cdot F(\beta ) -F(\alpha )-\frac{383}{250} F(\beta )+\frac{633}{250}$ that
  $\left(\frac{383}{250} -\frac{294}{169}\cdot F(\alpha )\right)F(\beta )>\frac{2367}{2500}-F(\alpha )$.
 Notice from $F(\alpha)<0.81$ in (\ref{r2}) that  $\frac{383}{250} -\frac{294}{169}\cdot F(\alpha )>0$. Therefore (\ref{r4}) implies $\lambda_1>F(\beta )>\frac{2367/2500-F(\alpha )}{383/250 -294\cdot F(\alpha )/169}$, giving
\begin{eqnarray}
  F(\alpha )<\frac{169(3830\lambda_1-2367)}{2500(294\lambda_1-169)} =\lambda_2
  \,.\label{r5}
\end{eqnarray}
It follows from $1.5852>\varphi(1,\alpha )=2  F(1)\cdot F(\alpha )- F(1)-\frac{169}{125}   F(\alpha )+\frac{294}{125}$ that $(2 F(\alpha )-1)F(1)<\frac{169}{125}   F(\alpha )-\frac{1917}{2500}$. Since $F(\alpha )>0.5$ by (\ref{r2}), we have
$F(1)<\frac{169 \cdot F(\alpha )/125-1917/2500}{2\cdot F(\alpha )-1}=\frac{169}{250}-\frac{227/2500}{2\cdot F(\alpha )-1}$, which along with   (\ref{r5}) gives
\begin{gather}\label{r6}
F(1)<\frac{169}{250}-\frac{227/2500}{2\lambda_2-1}=\lambda_3
\,.
\end{gather}

Next, we examine values for $x,y\le1$. Using (\ref{limit}), we obtain $1.5852>\varphi(\frac1{\alpha },1)=2-\frac{81}{125}\cdot F(\frac1{\alpha })-F(1)+2\cdot F(1)\cdot F(\frac1{\alpha })$, i.e.,
\[(2\cdot F(1)-0.648)F\left(\frac1{\alpha }\right)<F(1)-0.4148\,.\]
If $F(1)\le 0.324$, then $F(1)< 0.324$ and $F\left(\frac1{\alpha }\right)>\frac{F(1)-0.4148}{2F(1)-0.648}=\frac12+\frac{0.0908}{0.648-2F(1)}>0.5$, giving a contradiction to $F\left(\frac1{\alpha }\right)\le F(1)<0.324$. Hence $F(1)> 0.324$ and $F\left(\frac1{\alpha }\right)
<\frac{F(1)-0.4148}{2F(1)-0.648}=\frac12-\frac{0.0908}{2F(1)-0.648}$. By (\ref{r6}) we have
\begin{eqnarray}\label{r7}
F\left(\frac1{\alpha }\right)<\frac12-\frac{0.0908}{2\lambda_3-0.648}=\lambda_4
\,.
\end{eqnarray}
        From $1.5852>\varphi (\frac1{\beta },\frac1{\alpha } )=\frac{294}{169}F(\frac1{\beta } )F(\frac1{\alpha }) -\frac{8773}{42250}F (\frac1{\beta } )-\frac{125}{169}F (\frac1{\alpha } )
+
\frac{294}{169}$, we deduce that
\begin{equation}\label{eq}
\left(73500\cdot F\left(\frac1{\alpha}\right)-8773\right)F\left(\frac1{\beta}\right)<31250\cdot F\left(\frac1{\alpha}\right)-6525.3\,.\end{equation}
 If $F(\frac1{\alpha})\le\frac{8773}{73500}$, then $31250\cdot F (\frac1{\alpha} )-6525.3<0$, implying $F(\frac1{\alpha})<\frac{8773}{73500}<0.2$ and
  \[F \left(\frac1{\beta } \right)>\frac{31250\cdot
F \left(\frac1{\alpha } \right)-6525.3}{{73500\cdot F(\frac1{\alpha })- {8773} }}
=\frac{125}{294}+\frac{{6525.3-\frac{125}{294}\cdot {8773} }}{{8733-735000\cdot F (\frac1{\alpha } ) }}>\frac{125}{294}>0.4>F\left(\frac1{\alpha}\right).\] However $F(\frac1{\beta})>F(\frac{1}{\alpha})$ contradicts the fact that $F$ is non-decreasing. Thus $F(\frac1{\alpha})>\frac{8773}{73500}$, and
it follows from (\ref{eq}) that $F (\frac1{\beta } )<\frac{125}{294}-\frac{6525.3-1096625/294}{73500\cdot F(\frac1{\alpha})-8773}$. In turn (\ref{r7}) implies 
\begin{gather}\label{final}
F\left(\frac1{\beta }\right)<\frac{125}{294}-\frac{6525.3-1096625/294}{73500\lambda_4-8773}
<0.1143\,.
\end{gather}
On the other hand, we deduce from $1.5852>\varphi (\frac{1}{\beta },\frac{1}{\beta } )=\frac{633}{383} (F (\frac1{\beta } ) )^2-\frac{74061}{95750}
F (\frac1{\beta } )+\frac{633}{383}$ that $F (\frac1{\beta } )> \frac{117}{500}-\frac{\sqrt{154595269}}{105500}>0.116$, contradicting (\ref{final}).
\end{proof}

In the previous proof of lower bound $1.5788$ \cite{ly08}, Lu and Yu showed that for a parameter $\gamma\doteq1.434$, the values $\varphi(\gamma,1/\gamma)$ and $\varphi(\gamma,1)$ cannot be both smaller than 1.5788 no matter what $F$ is chosen. As seen from the above, our improved lower bound 1.5852 is established by introducing two parameters $\alpha=1.352$, $\beta=1.532$, and considering function value $\varphi$ at seven point: $(\alpha,\alpha),(\beta,\beta),(\alpha,\beta),(1,\alpha),(1/\alpha,1),(1/\beta,1/\alpha)$ and $(1/\beta,1/\beta)$.
\section{Weakly independent tasks}\label{dep}
 We assume function $F(\cdot)$ takes the form of (\ref{ourformula}). The weak independence is specified by the joint distribution $H(x_i,x_j)= \left[\left( \sqrt[n-1]{  F(x_i)}+\sqrt[n-1]{  F(x_j)}-1\right)^+\right]^{n-1}$ as in (\ref{dependent}).

Using the Copula based distribution, Mechanism \ref{alg1} can guarantee approximation 1.5067711  for $n=2$ tasks, as proved in Section \ref{sec:n=2}. We study the case of $n\ge3$ tasks in Section \ref{sec:nge3}, where MATLAB's global solver is used to solve the optimization problems involved in the computer conducted search/proof of the approximation ratio. Our results show that the Clayton Copula based algorithm outperforms the strong independent-task allocation, and the former converges to the later as $n$ approaches to infinity.

\subsection{The case $n=2$} \label{sec:n=2}
In this subsection, we reduce Lu's upper bound $\frac16(\sqrt{25-12\sqrt3}+7)\doteq 1.5089$ \cite{l09} for two tasks by $0.00208$, which narrows the gap from the lower bound $1.506$ \cite{l09} to be 0.0007711.
For the case of $n=2$, we have $H(x_1,x_2)=  \left(   F(x_1)+ {  F(x_2)}-1\right)^+ $ and $\varphi(x,y)=1+y-\min\left\{1,1-\frac1x+y\right\}F(x)-yF(y)+\min\left\{1+\frac1x,1+y\right\}(F(x)+F(y)-1)^+$.

\begin{lemma}\label{simplify}
Let distribution function $G$ satisfy \eqref{defG}.
When $n=2$, $\varphi(x,y)=\varphi(1/y,1/x)$ for any $x,y\in\mathbb R_+$.
\end{lemma}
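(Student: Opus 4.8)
The plan is to mimic the computation behind Lemma~\ref{=}, but now keeping the Clayton-Copula term $H$ intact instead of factoring it. The key structural input is the symmetry~\eqref{symmetric}, namely $F(x)+F(1/x)=1$ for all $x\ge 0$, which is a property of the piecewise function~\eqref{ourformula} assumed throughout Section~\ref{dep}; combined with the $n=2$ form $H(u,v)=(F(u)+F(v)-1)^+$ this is what makes the two sides of the claimed identity match. First I would write out $\varphi(1/y,1/x)$ from the displayed formula just above the lemma: substituting $x\mapsto 1/y$, $y\mapsto 1/x$ gives
\[
\varphi(1/y,1/x)=1+\tfrac1x-\min\{1,\,1-y+\tfrac1x\}F(1/y)-\tfrac1x F(1/x)+\min\{1+y,\,1+\tfrac1x\}\bigl(F(1/y)+F(1/x)-1\bigr)^+ .
\]
Now apply~\eqref{symmetric} in the form $F(1/y)=1-F(y)$ and $F(1/x)=1-F(x)$. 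The last term is immediate: $F(1/y)+F(1/x)-1 = 1-F(y)-F(x)$, so one must check that $(1-F(x)-F(y))^+$ relates to $(F(x)+F(y)-1)^+$ correctly — and indeed it does not equal it in general, so the matching has to come from combining this term with the linear terms, exactly as in the algebra below.

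The main step is then a bookkeeping identity: expand both $\varphi(x,y)$ and the rewritten $\varphi(1/y,1/x)$, replace every occurrence of $F(1/x),F(1/y)$ by $1-F(x),1-F(y)$, and also rewrite the "$\min$" coefficients, noting that $\min\{1,1-\tfrac1x+y\}=1-(\tfrac1x-y)^+$ and $\min\{1+\tfrac1x,1+y\}=1+\min\{\tfrac1x,y\}$, and similarly for the swapped arguments. After this substitution both expressions should reduce to the same symmetric function of $F(x),F(y),x,y$; the cross term $(F(x)+F(y)-1)^+$ appears in $\varphi(x,y)$ while $(1-F(x)-F(y))^+$ appears in the transformed $\varphi(1/y,1/x)$, and the difference $(F(x)+F(y)-1)^+ - (1-F(x)-F(y))^+ = F(x)+F(y)-1$ (since $u^+-(-u)^+ = u$) is precisely what is needed to cancel the discrepancy between the two linear parts. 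I would organize the verification by collecting the coefficient of $F(x)$, the coefficient of $F(y)$, the coefficient of the product/kink term, and the constant, and checking equality termwise; a short case split on the sign of $\tfrac1x-y$ (equivalently $xy\gtrless 1$) handles the $(\cdot)^+$ in the $\min$-coefficients, and since $xy\ge 1 \iff (1/y)(1/x)\le 1$ the two sides swap regimes consistently.

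The main obstacle — really the only delicate point — is keeping the three separate $(\cdot)^+$ expressions straight: the one inside $\min\{1,1-\tfrac1x+y\}$, the one inside $\min\{1+\tfrac1x,1+y\}$, and the Clayton kink $(F(x)+F(y)-1)^+$. These are not independent, and a naive "drop the $+$" will give the wrong answer; the identity genuinely uses $u^+ - (-u)^+ = u$ to trade the mismatch between $(F(x)+F(y)-1)^+$ and $(1-F(x)-F(y))^+$ against the linear $F$-terms. Once that cancellation is pinned down, the remaining equality is elementary, so I would simply state that it "follows by elementary mathematics" as in Lemma~\ref{=}, after exhibiting the substitution and the key cancellation.
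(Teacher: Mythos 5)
Your proposal is correct and follows essentially the same route as the paper: substitute via $F(1/t)=1-F(t)$ from \eqref{symmetric}, reduce to the case $xy\ge 1$ (equivalently, fix the $\min$-coefficients), and verify the resulting algebraic identity. The only difference is minor: the paper uses the monotonicity of $F$ to evaluate the kink terms outright when $xy\ge1$ (so $H(x,y)=F(x)+F(y)-1$ and $H(1/y,1/x)=0$), whereas you keep both positive parts and cancel their difference against the linear terms via $u^+-(-u)^+=u$; both computations check out.
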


\begin{proof} Without loss of generality we may assume $xy\ge1$. Since $F(\cdot)$ is non-decreasing and satisfies (\ref{symmetric}), we have $F(x)\ge F(1/y)=1-F(y)$, and $\varphi(x,y)=1+y-F(x)-yF(y)+(1+1/x)(F(x)+F(y)-1)$

On the other hand, $F(1/y)+F(1/x)\le F(1/y)+F(y)=1$ implies $\varphi(1/y,1/x)=1+1/x-(1-y+1/x)F(1/y)-F(1/x)/x=1+1/x-(1-y+1/x)(1-F(y))-(1-F(x))/x$. Now it is easy to check that $\varphi(x,y)=\varphi(1/y,1/x)$.
\end{proof}

Lu's approximation ratio $  1.5089$ \cite{l09} was proved by choosing $F$ to be a continuous algebraic function piecewise-defined on four intervals according to a constant parameter. Next, we show that our piecewise algebraic function in (\ref{ourformula}), with appropriate choices of {\em two} constants $a$ and $b$, provides an improved approximation ratio.}
\begin{theorem}\label{th:n=2}
Let $F(\cdot)$ be defined as in (\ref{ourformula}) with $a=2.2468$ and $b=0.7607$.
For $n=2$, using
  $G(x_1,x_2)$ in (\ref{defG}), Mechanism \ref{alg1} achieves approximation ratio $1.5067711$.
\end{theorem}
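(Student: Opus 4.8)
The plan is to follow exactly the same case-analysis architecture as in the proof of Theorem~\ref{th:independent}, but with the joint distribution $H(x,y)=(F(x)+F(y)-1)^+$ in place of the product $F(x)F(y)$. First I would invoke Theorem~\ref{th:ratio} to reduce the claim to showing $\max_{x,y\in\mathbb R_+}\varphi(x,y)\le\rho^*=1.5067711$, where now
\[
\varphi(x,y)=1+y-\min\left\{1,1-\tfrac1x+y\right\}F(x)-yF(y)+\min\left\{1+\tfrac1x,1+y\right\}\bigl(F(x)+F(y)-1\bigr)^+.
\]
By Lemma~\ref{simplify} the map $(x,y)\mapsto(1/y,1/x)$ is a symmetry of $\varphi$, so I may restrict attention to the region $xy\ge1$. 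As observed in the proof of Lemma~\ref{simplify}, on this region $F(x)\ge F(1/y)=1-F(y)$, so the positive part is inactive and $\varphi$ simplifies to
\[
\varphi(x,y)=1+y-\min\left\{1,1-\tfrac1x+y\right\}F(x)-yF(y)+\min\left\{1+\tfrac1x,1+y\right\}\bigl(F(x)+F(y)-1\bigr),
\]
which on the subregion $xy\ge1$ (where $1/x\le y$ and typically $1-1/x+y\ge1$ unless $y$ is tiny) further reduces to a piecewise-rational expression of the same flavor as \eqref{xy>=1}.

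Next I would let $(x^*,y^*)\in\arg\max_{xy\ge1}\varphi(x,y)$, noting $\varphi$ is continuous, and split according to which of the six intervals $I_1,\dots,I_6$ (determined by the demarcation points $\tfrac1a,\tfrac2{a+1},1,\tfrac{a+1}2,a$ for $a=2.2468$) the coordinates $x^*$ and $y^*$ lie in. The outer cases ($x^*\ge a$ or $y^*\ge a$, i.e. $F(x^*)=1$ or $F$ evaluated past its plateau) collapse $\varphi$ to a one-variable function of the remaining coordinate, exactly as in Cases~1 and~2 of Theorem~\ref{th:independent}; there $\varphi$ is easily bounded by something like $1+1/a$ or by checking a single critical point. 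For each remaining case $x\in I_i$, $y\in I_j$ with $2\le i,j\le6$ and $xy\ge1$, $\varphi$ is an explicit algebraic function, so I would: (i) write down $\partial\varphi/\partial x$ and $\partial\varphi/\partial y$; (ii) either show one partial has a constant sign on the case domain, forcing $(x^*,y^*)$ onto a boundary (a demarcation line), thereby reducing to a lower-dimensional or one-variable problem, or (iii) solve the system $\nabla\varphi=0$, which reduces to a low-degree polynomial equation (the $F$-formula is piecewise \emph{linear} in $x$ or in $1/x$, so $\varphi$ restricted to a cell is a ratio of low-degree polynomials and the stationarity conditions are polynomial), isolate the unique root in the cell, and evaluate $\varphi$ there; (iv) also evaluate $\varphi$ at the finitely many demarcation/corner points of the cell. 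In every case the resulting numerical value must come out strictly below $1.5067711$, with the bottleneck presumably attained at some boundary point analogous to $\varphi\bigl(\tfrac{a+1}2,\cdot\bigr)$ in Theorem~\ref{th:independent}; I would collect the worst few into a summary table as in Table~\ref{tb1}.

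The main obstacle is twofold. First, the choice of the two constants $a=2.2468$, $b=0.7607$ is delicate: they must be tuned so that no single case exceeds $1.5067711$, and verifying this is essentially a nonconvex optimization over each of the $\sim 15$–$20$ active $(i,j)$ cells; getting the critical-point computations to yield clean-enough algebraic roots and confirming each bound rigorously (not merely numerically) is the bulk of the work. Second, because $H$ now involves the positive part $(F(x)+F(y)-1)^+$, one must be careful near the region where $F(x)+F(y)=1$ — on $xy\ge1$ this is harmless by the Lemma~\ref{simplify} observation, but the inner $\min$'s in $\varphi$ still introduce additional subdivision (according to whether $1-1/x+y\lessgtr 1$ and $1+1/x\lessgtr 1+y$), so each of the six-by-six interval cases may itself split into a couple of sub-cases, multiplying the bookkeeping. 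I expect the proof to be a long but mechanical case check, with the true difficulty concentrated in identifying the binding case and certifying the inequality there; the remaining cases should each dispatch quickly via a sign argument on a partial derivative or a single critical-value evaluation.
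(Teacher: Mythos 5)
Your plan coincides with the paper's own proof: Theorem~\ref{th:ratio} plus the symmetry of Lemma~\ref{simplify} to restrict to $xy\ge1$, where $F(x)+F(y)-1\ge0$ and the $\min$'s resolve (indeed $y\ge 1/x$ makes $1-\tfrac1x+y\ge1$ always, not just ``typically''), followed by the same interval-by-interval sign/critical-point analysis with the bottleneck at the critical point of $\varphi\bigl(\tfrac{a+1}2,y\bigr)$ in the $x,y\in[\tfrac{a+1}2,a)$ cell. What remains is only the mechanical execution of the case computations you outline, which is exactly what the paper carries out.
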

\begin{proof} By the setting of $a$ and $b$, we see that $a-1$, $1-b$, $2b-1$,  $3b-2$, $2ab-a-1$, $a+1-4b$, $2a-3ab+b$, $3a+1-4ab$ are all positive. We will use this fact implicitly in our analysis.
In view of Lemma \ref{simplify}, it suffices to consider $xy\ge1$, $F(x)\ge F(1/y)=1-F(y)$, and
\begin{eqnarray}
\varphi(x,y)&=&1+y- F(x)-yF(y)+ \left(1+\frac1x\right)(F(x)+F(y)-1) \nonumber\\
&=&y-\frac1x+\frac1xF(x)+\left(1+\frac1x-y\right)F(y)\label{n=2}
\end{eqnarray}
In the following, we will consider $x\ge a$ in Case 1, $y\ge a$ in Case 2, and $\max\{x,y\}<a$ in Cases 3 -- 6. Let $(x^*,y^*)$ with $x^*,y^*>0$ and $x^*y^*\ge1$ maximize $\varphi(x,y)$ in (\ref{n=2}).

\paragraph{\sc{Case 1. $x\ge a$.} } It follows from $F(x)=1$ that $\varphi(x,y)=y+(1+\frac1x-y)F(y)$. In case of $y\le1+\frac1x$ or $y\ge a$, we have \begin{center}$\varphi(x,y)\le y+\left(1+\frac1x-y\right) =1+\frac1x\le1+\frac1a<1.5$.\end{center}
  In case of $y>1+\frac1x$ and $y< a$, we have $y\in(1,a)$. 

When $y\in[\frac{a+1}2,a)$, it  
follows from (\ref{ourformula}) and (\ref{n=2}) that
\begin{center}$\frac{\partial\varphi}{\partial x}(x,y)=-\frac{2ab-a-1+2(1-b)y}{(a-1)x^2}
<0$.
  \end{center}
  By KKT condition, we see that $(x^*,y^*)$ does not belong to this  case unless $x^*=a$. In case of $x=a$, function $\varphi(a,y)$ has a unique critical point $y=\frac{a^2+a+1}{2a}$ in $(\frac{a+1}2,a)$, at which the critical value is less than 1.50677. At the boundary point
, we have $\varphi\left(a,\frac{a+1}2\right)<1.5$.

When $y\in(1,\frac{a+1}2)$, since 
$\frac{\partial\varphi}{\partial x}(x,y)=-\frac{a+1-4b+(4b-2)y}{2(a-1)x^2}
<0$, it suffices to consider the case where $x=a$. Note that  $\frac{\partial\varphi}{\partial y}(a,y)=\frac{17770909}{11672126}-\frac{869}{1039}y>\frac{3}{2}-\frac{869}{1039}(\frac{a+1}2) >0.1$, which excludes the possibility of $(x^*,y^*)$ belonging to this case.

\paragraph{\sc Case 2.} $y\ge a> x\ge 0$. It follows from $F(y)=1$ that $\varphi(x,y)=1+\frac{F(x)}x$ is a function of single variable $x$. When $x\in[0,\frac{1}a]$, it is clear that $\varphi(x,y)=1$. The derivative of $\varphi(x,y)$ is $\frac{2(1-b)(2-ax)}{(a-1)x^3}>0$ for $x\in(\frac1a,\frac2{a+1})$, $-\frac{a+1-4b}{(a-1)x^2}<0$ for $x\in (1,\frac{a+1}2)$, and $-\frac{2ab-a-1}{(a-1)x^2}<0$ for $x\in (\frac{a+1}2,a)$.  So we may assume $x\in[\frac2{a+1},1]$
Within $x\in(\frac2{a+1},1)$, the derivative of  $\varphi(x,y)
$ has  a unique root $x_0=\frac{4(2b-1)}{a-3+4b}$. So we only need to consider $\varphi(x_0,y)$, $\varphi(1,y)$, and   $\varphi(\frac2{a+1},y)=\frac{3+a-b-ab}2
 $. All three values are smaller than $1.505$.

\bigskip
In the following case analysis, we consider only $x,y< a$. As $xy\ge1$, we have $x,y>\frac1a$. We distinguish among four cases for $x\in[\frac{a+1}2,a)$, $x\in[1,\frac{a+1}2)$, $x\in[\frac2{a+1},1)$, and $x\in(\frac1a,\frac2{a+1})$, respectively.

\paragraph{\sc Case 3.} $x\in[\frac{a+1}2,a)$. We distinguish among four subcases for $y\in[\frac{a+1}2,a)$, $y\in[1,\frac{a+1}2)$, $y\in[\frac2{a+1},1)$, and $y\in(\frac{1}{a},\frac{2}{a+1})$, respectively.

\paragraph{\sc Case 3.1.} $y\in[\frac{a+1}2,a)$.
It follows from (\ref{ourformula}) and (\ref{n=2}) that 
\begin{center}
$\frac{\partial\varphi}{\partial x}(x,y)=\frac{2(b-1)y-4ab+3a+1}{(a-1)x^2}$ and $ \frac{\partial\varphi}{\partial y}(x,y)=\frac{2(1-b)(ax+x-2yx+1)}{(a-1)x}$.
\end{center}
In case of $x,y\in[\frac{a+1}2,a)$, the unique solution of $\frac{\partial\varphi}{\partial x}(x,y)=0=\frac{\partial\varphi}{\partial y}(x,y)$,   $x_0= \frac{1-b}{2a-3ab+b}$ and $y_0=\frac{3a-4ab+1}{2(1-b)}$, gives the critical value $\varphi (x_0,y_0)<1.5061$.

In case of $x=\frac{a+1}2$ and $y\in(\frac{a+1}2,a)$,  function $\varphi(\frac{a+1}2,y)$ has a unique critical point $y_0=\frac{ a^2+2a+3}{2(a+1)}$, giving  critical value $\varphi\left(\frac{a+1}2,y_0\right)<1.5067711$ \footnote{A more accurate upper bound is 1.506771096398094922363952719025.}.

In case of $y=\frac{a+1}2$, function  $\varphi(x,\frac{a+1}2)$ has positive derivative $\frac{ 2a-3ab+b}{(a-1)x^2}
>0
$ for all $x\in(\frac{a+1}2,a)$, saying that $(x^*,y^*)$ does not belong to  this case.


\paragraph{\sc Case 3.2.} $y\in[1,\frac{a+1}2)$.\;\; 
In case of $1<y<\frac{a+1}2<x<a$, solving $\frac{\partial\varphi}{\partial x}(x,y)=0=\frac{\partial\varphi}{\partial y}(x,y)$, we obtain a unique critical point $x_0=\frac{2(2b-1)}{5a+3-8ab}$, $y_0=\frac{3a-4ab+4b-1}{2(2b-1)}$ of $\varphi$, giving critical value $\varphi(x_0,y_0)<1.504$.

In case of $1<y<\frac{a+1}2=x$,  function $\varphi(\frac{a+1}2,y)$ attains its critical value $\varphi(\frac{a+1}2,y_0)<1.504
$ at its unique critical point $y_0=\frac{a^2+8ab-4a+16b-9}{
4(2b-1)(a+1)}
$.

In case of $y=1$ and $x\in(\frac{a+1}2,a)$, the derivative of  $\varphi(x,1)$ is $\frac{3a+1-4ab}{2(a-1)x^2}
>0$, implying that $(x^*,y^*)$ does not belong to  this case.


\paragraph{\sc Case 3.3.} $y\in[\frac2{a+1},1)$.  It follows from (\ref{ourformula}) and (\ref{n=2}) that 
\begin{center}
$\frac{\partial\varphi}{\partial x}(x,y)=\frac{2(1-b)yx^2+(a+1-4b)y +4b-2}{2(a-1)yx^2}$.
\end{center}
If $\frac{\partial\varphi}{\partial x}(x,y)=0$, then  $x^2=-\frac{(a+1-4b)y +4b-2}{2( 1-b)y}
<0$ shows a contradiction. Thus $\frac{\partial\varphi}{\partial x}(x,y)\ne 0$, and it suffices to consider the case where $x=\frac{a+1}2$. Note that the derivative of $\varphi(\frac{a+1}2,y)$ is $\frac{(a+1)(a+1-4b)}{2(a^2-1)}+\frac{(a+3)(2b-1)}{(a^2-1)y^2}>0$. 
We deduce that  $(x^*,y^*)$ does not belong to Case 3.3.

\paragraph{\sc Case 3.4.} $y\in(\frac1a,\frac2{a+1})$.
Since $ \frac{\partial\varphi}{\partial x}(x,y)=\frac{2(1-b)}{(a-1)x^2y}>0$, we deduce that  $(x^*,y^*)$ does not belong to this case.

\paragraph{\sc Case 4.} $x\in[1,\frac{a+1}2)$. It follows from $xy\ge1$ that $y>\frac2{a+1}$, for which we distinguish between two subcases for $y\in[\frac{a+1}2,a)$ and  $y\in[\frac2{a+1},\frac{a+1}2)$, respectively.

{Consider the subcase of $y\in[\frac{a+1}2,a)$.} When $1<x<\frac{a+1}2<y<a$, solving   $\frac{\partial\varphi}{\partial x}(x,y)=0=\frac{\partial\varphi}{\partial y}(x,y) $ gives the unique critical point $x_0=\frac{2(1-b)}{a-2ab+6b-3}$,
$y_0=\frac{3a-4ab+4b-1}{4(1-b)}$, and the corresponding critical value $\varphi(x_0,y_0)<1.503$.
When $x=1$, the unique critical point  of $\varphi(1,y)$ is $y
=\frac{a+2}2$, giving critical value $\varphi(1,\frac{a+2}2)<1.506$.
When $y=\frac{a+1}2$, the derivative of $\varphi(x,\frac{a+1}2)$ is $\frac{(2b-1)(3-a)}{2(a-1)x^2}>0$ for all $x\in(1,\frac{a+1}2)$, 
excluding the possibility of $(x^*,y^*)$ belonging to this case.

{Consider the subcase of $y\in[\frac2{a+1},\frac{a+1}2)$.} 
Note that  $ \frac{\partial\varphi}{\partial x}(x,y)=\frac{(2b-1)(2-y)}{(a-1)x^2}>0$ for all $y\in(1,\frac{a+1}2)$, and   
  $ \frac{\partial\varphi}{\partial x}(x,y)=\frac{2b-1}{(a-1)yx^2}>0$ for all $y\in(\frac2{a+1},1)$.  We deduce that $(x^*,y^*)$ does not belong to this case.

\paragraph{\sc Case 5.} $x\in[\frac{2}{a+1},1)$. It follows from $xy\ge1$ that $y>1$. We distinguish between two subcases depending on whether $y$ is at least $\frac{a+1}2$ or not.

{Consider the subcase of $y\in[\frac{a+1}2,a)$.} When $\frac{2}{a+1}<x<\frac{a+1}2<y$, solving   $\frac{\partial\varphi}{\partial x}(x,y)=0=\frac{\partial\varphi}{\partial y}(x,y)$, we obtain a unique critical point $x_0=\frac{2(5b-3)}{2ab+2b-a-1}$,
$y_0=\frac{ (a+1) (12b-7)}{4(5b-3)}$ corresponding critical value $\varphi(x_0,y_0)<1.50677$.
 When $y=\frac{a+1}2$. the derivative of $\varphi(x,\frac{a+1}2)$ is $ \frac{8b-4-(a-1)(2b-1)x}{2(a-1)x^3}> \frac{8b-4-(a-1)(2b-1)}{2(a-1)x^3} =\frac{(3-a)(2b-1)}{2(a-1)x^3} >0
 $ for all $x\in(\frac{2}{a+1},1)$, saying that $(x^*,y^*)$ does not belong to this case.
  When $x=\frac2{a+1}$, the derivative of $\varphi(\frac2{a+1},y)=\frac{3a-4y+3}{(1-b)(a-1)}>  \frac{3a-4a+3}{(1-b)(a-1)}>0$ for all $y\in(\frac{a+1}2,a)$,    saying that $(x^*,y^*)$ does not belong to this case.

{Consider the subcase of $y\in(1,\frac{a+1}2)$.} When $x<\frac2{a+1}$, solving $\frac{\partial\varphi}{\partial x}(x,y)=0=\frac{\partial\varphi}{\partial y}(x,y)$, we obtain a unique critical point $x_0=\frac{6(2b-1)}{a+8b-5}$,
$y_0=\frac{a+8b-5}{ 3 (2 b-1)}$, corresponding critical value $\varphi(x_0,y_0)<1.506$.
When $x=\frac2{a+1}$, the derivative of function $\varphi(\frac2{a+1},y)$ is $\frac{ab+5b-3-2(2b-1)y}{a-1}>\frac{ab+5b-3-(2b-1)(a+1)}{a-1}=\frac{a(1-b)+3b-2}{a-1}>0$,  saying that $(x^*,y^*)$ does not belong to this case.

\paragraph{\sc Case 6.} $x\in(\frac1a,\frac{2}{a+1})$. It follows from $xy\ge1$ that $y>\frac{a+1}2$.
 If $\frac{\partial\varphi}{\partial x}(x,y)=0$, then it can be deduced   that  $x=\frac2y
$, which along with $x\le\frac2{a+1}$ enforces $y\ge a+1 $, a contradiction to $y<a$.  Thus $\frac{\partial\varphi}{\partial x}(x,y) $ is always positive or always negative, saying that $(x^*,y^*)$ does not belong to Case 6.
\end{proof}

\subsection{The cases $n\ge3$}\label{sec:nge3}
In this subsection, we mainly discuss the multiple task case $n\ge3$. We look for a distribution function $F(\cdot)$ of form (\ref{ourformula}) which minimizes the maximum of the binary function
 \begin{equation}\label{defphi}
\varphi(x,y)\!=\!1 + y - \min\!\left\{\!1,1\!-\!\frac1x\!+\!y\!\right \}F(x) - yF(y) + \min\!\left\{\!1\!+\!\frac1x,1\!+\!y\!\right\}\!
\left[(\sqrt[n-1]{F(x)}\!+\!\!\!\sqrt[n-1]{F(y)}\!-\!1)^+\right]^{n-1}.\end{equation}
To accomplish the task, we need determine the maximum of $\varphi$ for any given constants $a$ and $b$. Theoretically, this can be done in a way similar to the proofs of Theorems \ref{th:independent} and \ref{th:n=2}. In practice, computer-assisted arguments turn out more suitable, as explained below.
\begin{itemize}
\item The above case analyses are simplified by the property that $\varphi(x,y)=\varphi(1/y,1/x)$ (see Lemmas \ref{=} and \ref{simplify}), which allows us to only focus on the case of $xy\ge1$. For $n\ge3$, this property is generally lost due to the {\em complicated term} $[(\sqrt[n-1]{F(x)} + \sqrt[n-1]{F(y)} - 1)^+]^{n-1}$ in (\ref{defphi}). As a result, it might be much more tedious to discuss all possible combinations for  $x,y$  from six intervals $[0,\frac1a]$, $[\frac1a,\frac2{a+1}]$, ..., $[a,+\infty)$  where $F(\cdot)$ is described by different linear expressions.
\item Finding the critical points of $\varphi(x,y)$ becomes more and more challenging as $n$ increases. One has to resort to software for solving equations of high degrees resulting from the complicated term.
\end{itemize}

 We conduct  a case analyses using MATLAB's global optimization tool {\sc GlobalSearch} (cf., \cite{Ugray07})
to help us to solve the nonlinear program $\max_{x,y} \varphi(x,y)$ subject to four constrains $xy\le$ (or $\ge$) 1, $\sqrt[n-1]{F(x)} + \sqrt[n-1]{F(y)}\le$ (or $\ge$) 1, $l_1\le x\le u_1$, $l_2\le y\le u_2$
for different choices of $n$, $a$, and $b$, where $l_1,u_1,l_2,u_2$  specify the intervals containing $x$ and $y$.
The computational results are summarized in Table \ref{tb2}. (More accurate data are presented in Appendix \ref{apx:data}.)
For each input triplet of $n, a, b$,
Table \ref{tb2} provides the values of
$(x^*,y^*)$ which attain the largest value of $\varphi(x,y)$
after
{\sc GlobalSearch}
is employed to solve the nonlinear program 10 times. 
The difference $\delta$ between the largest value of $\varphi(x,y)$ and the smallest one among the 10 computations is also recorded. From the last column of Table \ref{tb2} we observe that $\delta$ does not exceed $1.4066\times10^{-7}$, showing the stability of the computational results.

As the second line (when $n=2$) in Table~\ref{tb2} illustrates, {\sc GlobalSearch}
finds the optimal solution {established} in Theorem~\ref{th:n=2} within numerical
tolerance.
Actually the step of MATLAB program in which the overall maximum is found terminates at the critical point of $\varphi(\frac{a+1}2,y)$ with the message ``Magnitude of directional derivative in search
 direction less than 2*options.TolFun and maximum constraint violation
  is less than options.TolCon.'' 

 \begin{table}[h!]
\begin{center}
\renewcommand{\arraystretch}{1.45}
\begin{tabular}{c| c | c|  c| c|c|c}
\hline
$n$ & $a$ & $b$ & $x^*$  &$y^*$   &$ \varphi(x^*,y^*)$ &$\delta$ \\ [0ex] %
 \hline

2& 2.2468 & 0.7607 & $ \frac{a+1}2=1.6234 $  & 1.9313955486
& 1.5067710964
&$1.5499\times10^{-13}$\\[0ex]
\hline

3& 1.9328& 0.7418 & 1.9105670668
    & 1.7231009560
    &1.5412707361
 &$5.4073\times10^{-9}$   \\[0ex]
\hline

 4&1.8442 & 0.7453
    & $a=1.8442$  & 1.6932202823
    &1.5559952305
     &$8.8818\times10^{-16}$\\ [0ex]
\hline

5&$1.8070$& 0.7487     & 1.1418758036
& 1.5193285944
 &1.5634859375
 &$1.8911\times10^{-9}$\\[0ex]
\hline

6&$ 1.7863$& $0.7510 $&1.1468400067
    &1.4989121029
      &1.5679473463
 &$3.4101\times10^{-9}$     \\[0ex]
\hline

7& $1.7734$  &0.7526    &  1.1447309125
  & 1.4845715829
 &1.5709131851
&$2.7397\times10^{-8}$ \\ [0ex]
\hline

8&1.7646 &0.7536 &1.1192295299
&1.4661575387
&1.5730320737
&$4.9022\times10^{-9}$\\ [0ex]
\hline

9& 1.7581&0.7543&$\frac{a+1}2=1.37905$ &1.5499380481
&1.5746303803
&$2.1302\times10^{-9}$\\ [0ex]
\hline

10& 1.7530 &0.7548 &1.0673757071
&1.4334673997
&1.5758769995
&$4.5725\times10^{-8}$\\ [0ex]
\hline

15& 1.7410 &0.7570 &1.0190835924
&1.3975512392
&1.5795353027
&$3.2335\times10^{-8}$\\ [0ex]
\hline

20&1.7326&0.7573&0.9997077878
&1.3798783532
&1.5811826690
&$8.8565\times10^{-10}$\\ [0ex]
\hline

30&1.7267&0.7582&$\frac{a+1}2=1.36335$
&1.5259350403
&1.5828322598
&$2.3226\times10^{-13}$\\ [0ex]
\hline

45&1.7225&0.7587&0.9879452462
&1.3491108561
&1.5839252561
&$1.4493\times10^{-9}$
\\ [0ex]
\hline

70&1.7199&0.7592&0.9868820343
&1.3445069231
&1.5846893837
&$3.2863\times10^{-9}$
\\ [0ex]
\hline

100&1.7183&0.7594&$\frac{a+1}2=1.35915$
&1.5197905945
&1.5850948285
&$3.1020\times10^{-13}$
\\ [0ex]
\hline

200&1.7167&0.7597&$\frac{a+1}2=1.35835$
&1.5186228330
&1.5855735653
&$7.5118\times10^{-13}$\\ [0ex]
\hline

500&1.7156&0.7598&0.9851752572
&1.3375636313
&1.5858603200
&$1.8349\times10^{-9}$\\ [0ex]
\hline

1000&1.7153&0.7599&$\frac{a+1}2=1.35765$
&1.5176140596
&1.5859488980
&$3.2567\times10^{-12}$\\ [0ex]
\hline

5000&1.7150&0.7599&0.9849521898
&1.3365770913
&1.5860275919
&$2.9110\times10^{-9}$\\ [0ex]
\hline

$10^4$&1.7149&0.7599& $a=1.7149$
&1.6490128248
&1.5860403769
&$1.4479\times10^{-11}$\\ [0ex]
\hline

$10^5$&1.7149&0.7599&0.9849621198
&1.3364590898
&1.5860442151
&$5.2509\times10^{-9}$\\ [0ex]
\hline

$10^6$&1.7149&0.7599&0.9849513401
&1.3364514130
&1.5860456086
&$1.0466\times10^{-7}$\\ [0ex]
\hline

$\vdots$&$\vdots$&$\vdots$&$\vdots$
&$\vdots$
&$\vdots$\\ [0ex]
\hline

$\infty$&1.715&0.76& $\frac{a+1}2=1.3575$    & $\frac{a^2+a+ab+3b}{2a+2}\doteq1.5174 
$  &1.5860582220
&0\\ [0ex]
\hline
\end{tabular}
\caption{Computational results on minimizing the maximum of $\varphi$ (choosing $a$ and $b$ to minimize the maximum $\varphi(x^*,y^*)$), where the data in the last row for $n=\infty$ are taken from the proof of Theorem \ref{th:independent}.} 
\label{tb2}
\end{center}
\end{table}

The   second to last column of Table \ref{tb2} shows that $\varphi(x^*,y^*)$ increases as $n$ grows, interpreting the common sense that achieving truthfulness with respect to more tasks costs more. The increasing property of approximation ratios with respect to $n$ is illustrated in Figure \ref{fg:increase}, where we have the following observations.
\begin{itemize}
\item The curve makes a ``large'' jump at $n=3$, from 1.5068 to 1.5413;
\vspace{-2mm}\item The increasing speed is tiny after $n=30$, which attains $\varphi(x^*,y^*)\doteq1.5828$;
\vspace{-2mm}\item The curve looks flat after $n=100$; in particular the average slope is less than $5\times10^{-6}$ for $n\in[100,200]$.
\end{itemize}

\begin{figure}[htpb]
\begin{center}
\includegraphics[scale=0.45]{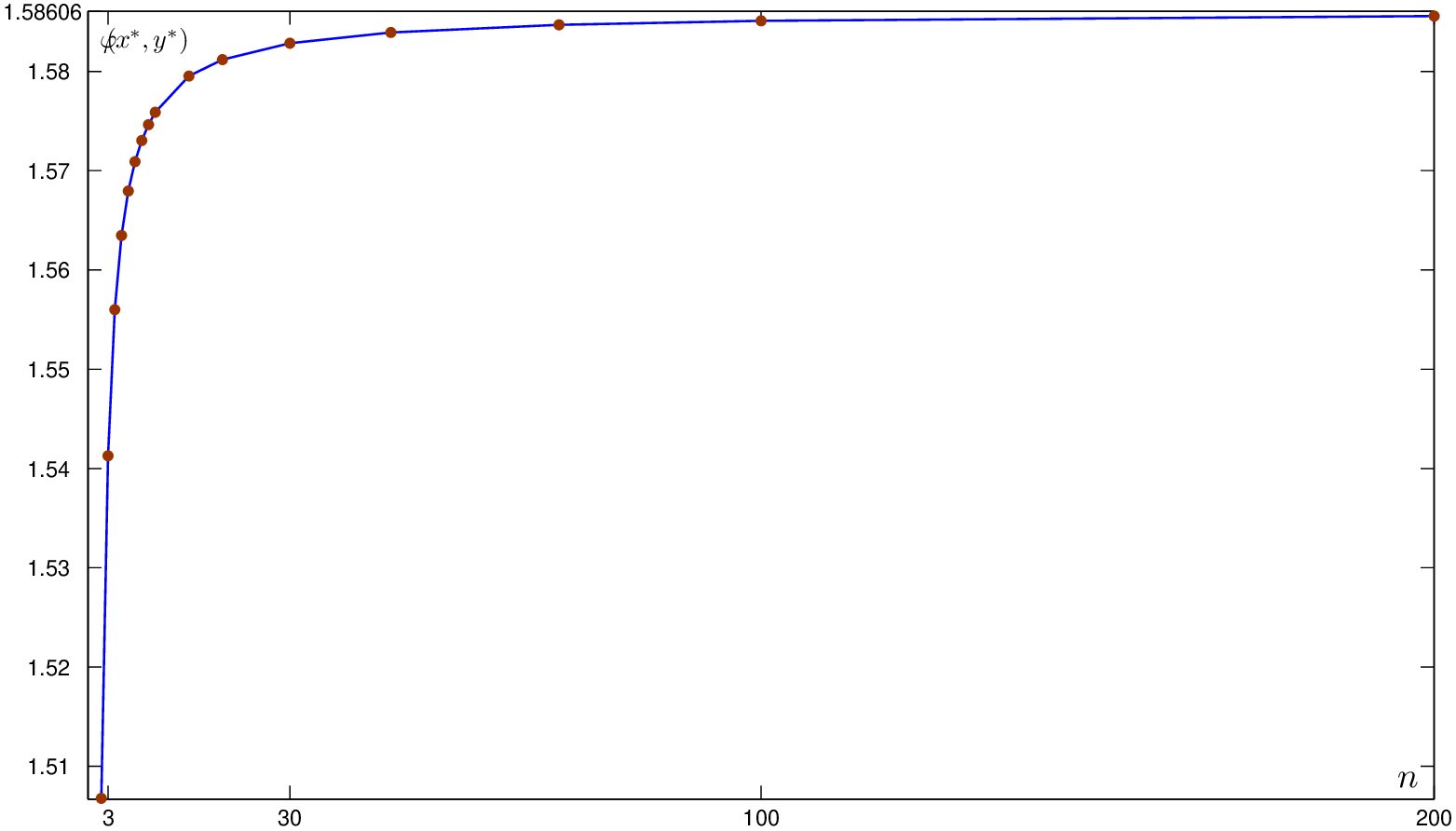}
\vspace{-7mm}\caption{ \label{fg:increase} The approximation ratio $\varphi(x^*,y^*)$ is increasing in the number $n$ of tasks. }
\includegraphics[scale=0.45]{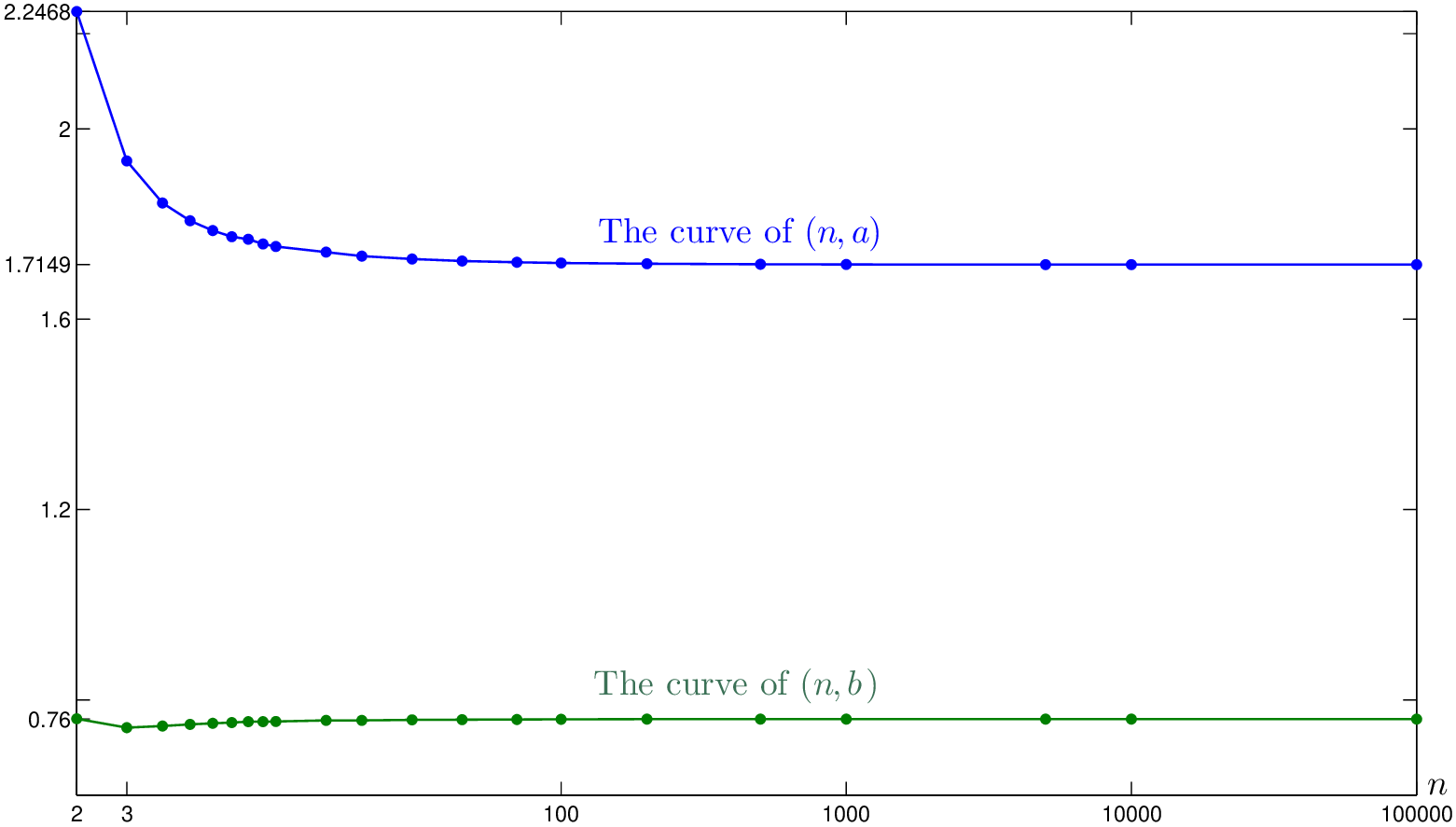}
\vspace{-7mm}\caption{ \label{fg:ab} The optimal value of $a$ (resp. $b$) is decreasing (resp. increasing) in the number $n$ of tasks (when $n\ge3$), and approaches  $1.715$ (resp. 0.76) as $n\rightarrow\infty$. }
\end{center}
\end{figure}

More interesting phenomena  are observed from the first three columns of Table \ref{tb2}: the optimal value of $a$ decreases with $n$, and approaches a limit approximately equal 1.7149; while starting from $n=3$ the optimal value of $b$ increases with $n$, and approaches a limit approximately equal 0.7599. See Figure \ref{fg:ab} for an illustration. Note that the limits ``coincide'' with the setting $a=1.715$ and $b=0.76$ in Theorem \ref{th:independent} for strongly independent tasks. The reason is that for any distribution function $F(\cdot)$, function $\varphi(x,y)$ in (\ref{defphi}) is always upper bounded by function $\varphi(x,y)$ in (\ref{independent}), and the former approaches the latter as $n$ tends to infinity. This fact is implied by Lemma \ref{approach} below.

\begin{lemma}\label{approach}
Given any distribution function $F(\cdot)$, it holds that
\begin{align*}
\left[\left(\sqrt[n-1]{F(x)} + \sqrt[n-1]{F(y)} - 1\right)^+\right]^{n-1}&\le F(x)F(y),\\
\lim_{n\rightarrow\infty}\left [ \left ( \sqrt[n-1]{F(x)} + \sqrt[n-1]{F(y)} - 1
\right )^+ \right ]^{n-1} &= F(x)F(y).
\end{align*}
\end{lemma}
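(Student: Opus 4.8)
The plan is to treat the two claims separately but with a common substitution. Write $u = \sqrt[n-1]{F(x)} \in [0,1]$ and $v = \sqrt[n-1]{F(y)} \in [0,1]$, so that $F(x) = u^{n-1}$ and $F(y) = v^{n-1}$, and set $m = n-1 \ge 1$. The quantity on the left-hand side of the inequality becomes $\bigl[(u+v-1)^+\bigr]^m$, and we must compare it with $u^m v^m = (uv)^m$. For the inequality it suffices to show the pointwise bound $(u+v-1)^+ \le uv$ for all $u,v \in [0,1]$; raising both nonnegative sides to the $m$-th power preserves the inequality. If $u+v \le 1$ the left side is $0$ and there is nothing to prove; if $u+v > 1$, the desired inequality $u+v-1 \le uv$ rearranges to $0 \le uv - u - v + 1 = (1-u)(1-v)$, which holds since $u,v \le 1$. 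This disposes of the first claim.

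For the limit, I would again use the substitution but now track the dependence on $n$ carefully: with $F(x), F(y) \in [0,1]$ fixed, we have $\sqrt[n-1]{F(x)} \to 1$ and $\sqrt[n-1]{F(y)} \to 1$ as $n \to \infty$ (the degenerate cases $F(x) = 0$ or $F(y) = 0$ are trivial since then both sides are identically $0$, so assume $F(x), F(y) \in (0,1]$). Hence for $n$ large enough the clamp $(\cdot)^+$ is inactive, and the expression equals $\bigl(\sqrt[n-1]{F(x)} + \sqrt[n-1]{F(y)} - 1\bigr)^{n-1}$. Taking logarithms, I would write $\sqrt[n-1]{F(x)} = e^{\ln F(x)/(n-1)} = 1 + \tfrac{\ln F(x)}{n-1} + O\!\bigl(\tfrac{1}{(n-1)^2}\bigr)$, and similarly for $F(y)$, so that
\[
\sqrt[n-1]{F(x)} + \sqrt[n-1]{F(y)} - 1 = 1 + \frac{\ln F(x) + \ln F(y)}{n-1} + O\!\left(\frac{1}{(n-1)^2}\right) = 1 + \frac{\ln(F(x)F(y))}{n-1} + o\!\left(\frac1{n-1}\right).
\]
Then $(n-1)\ln\bigl(\sqrt[n-1]{F(x)} + \sqrt[n-1]{F(y)} - 1\bigr) \to \ln(F(x)F(y))$ by the standard estimate $\ln(1+t) = t + O(t^2)$, and exponentiating gives the limit $F(x)F(y)$.

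The only mild obstacle is bookkeeping the error terms in the Taylor expansion of $e^{c/(n-1)}$ and of $\ln(1+\cdot)$ to be sure the $O\!\bigl(1/(n-1)^2\bigr)$ contributions vanish after multiplication by $n-1$; this is routine. One could alternatively avoid the expansion entirely by noting that the first claim already gives $\limsup_{n} \le F(x)F(y)$, and then establishing a matching $\liminf$ via the elementary bound $\sqrt[n-1]{t} \ge 1 + \tfrac{\ln t}{n-1}$ (valid since $e^s \ge 1+s$) combined with $(1 + \tfrac{c}{n-1})^{n-1} \to e^c$; I expect to present whichever of these two routes reads more cleanly, but the substitution $u^{m}, v^{m}$ is the organizing idea in either case.
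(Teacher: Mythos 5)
Your proposal is correct and follows essentially the same route as the paper: the substitution $u=\sqrt[n-1]{F(x)}$, $v=\sqrt[n-1]{F(y)}$ with the elementary bound $u+v-1\le uv$ (equivalently $(1-u)(1-v)\ge0$) for the inequality, and a logarithmic/Taylor computation showing $\bigl(F(x)^{1/(n-1)}+F(y)^{1/(n-1)}-1\bigr)^{n-1}\to e^{\log F(x)+\log F(y)}=F(x)F(y)$ for the limit. Your explicit treatment of the degenerate cases $F(x)=0$ or $F(y)=0$ and of the inactivity of the $(\cdot)^+$ clamp for large $n$ is a small tightening of details the paper leaves implicit.
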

\begin{proof}For the first statement, by writing $u=\sqrt[n-1]{F(x)} $ and $v=\sqrt[n-1]{F(y)} $, it suffices to show that $u+v-1\le uv$. Recall that $u,v\in[0,1]$. By $v\le1$ we have $u\ge u+v-1$, in turn by $u\le1$ we obtain $u(1-u)\ge(u+v-1)(1-u)$, which is equivalent to $u+v-1\le uv$.
In particular, this means that the   limit in the second statement exists, and its value follows from
$\lim_{n\to0}(a^n+b^n-1)^{\frac1n} = e^{\lim_{n\to0} \frac{1}{n} \log(a^n+b^n-1)} = e^{\log a + \log b} = ab$.
\end{proof}

\section{Concluding remark}\label{sec:conclude}
We note that the choice of Clayton Copula in (\ref{defG}) is not accidental. We wish to choose the Copula {which} leads to the best approximation ratio for our mechanism. However, Clayton Copula is the best lower bound among all Archimedean Copulas \cite{mcneil2009multivariate}. Therefore, any hope to improve the bounds presented in this work will have to resort to non-Archimedean Copulas, which usually lack the nice closed-form property of Archimedean Copulas.

\medskip\noindent{\bf Acknowledgements.} \quad  The authors are grateful Professor Pinyan Lu for providing the full versions of the conference papers \cite{l09,ly08}.   This work was done while Xujin Chen was visiting Faculty of Business Administration, University of New Brunswick. Xujin Chen was supported in part by NNSF of China (11222109), NSERC grants  (283106, 290377) and  CAS Program for Cross \& Cooperative Team of Science  \& Technology Innovation. Donglei Du was supported in part by NSERC grant 283106.
\bibliography{mechanism}

\appendix

\section*{Appendix}

\section{Details omitted in the proof of Theorem \ref{th:independent}}
\label{apx:independent}
In this section, we provide some routine computations  and elementary observations omitted in the proof of Theorem \ref{th:independent}.

 \paragraph{\sc Case 1.} It follows from (\ref{ourformula}) that $\varphi(x,y)=y+\left(1+\frac1x-y\right)\left(1-\frac{2(1-b)(a-y)}{a-1}\right)$  when $y\in [\frac{a+1}2,a)$, and  $\varphi(x,y)=y+\left(1+\frac1x-y\right)\left(\frac12+\frac{(2b-1)(y-1)}{a-1}\right)$ when $y\in (1,\frac{a+1}2)$.

\paragraph{\sc Case 2.}  Function $1+\frac{F(x)}x$ has positive derivative $\frac{541}{3575x^2}$ in $(\frac{a+1}2,a)$,   $ \frac{5}{22x^2}$ in $(1,\frac{a+1}2)$, $\frac{32-27x}{22x^3}$ in  $[\frac2{a+1},1)$, and $\frac{12( 400-343x)}{3575x^3}
$ in   $x\in(\frac1a,\frac2{a+1})$.

\paragraph{\sc Case 3.1.}
It follows from (\ref{ourformula}) and (\ref{xy>=1}) that $\varphi(x,y)\!=\!1\!+\!y\!-\! \left(1\!\!-\!\!\frac{2(1\!-\!b)(a\!-\!x)}{a-1}\right) -y\left (  1-\frac{2(1-b)(a-y)}{a-1}  \right) +\left(1+\frac1x\right)
 \left(1-\frac{2(1-b)(a-x)}{a-1} \right)\left (  1-\frac{2(1-b)(a-y)}{a-1}  \right)  $ and
\begin{eqnarray*}
2500(a-1)^2x^2\cdot\frac{\partial\varphi}{\partial x}(x,y)&=&(576x^2-129.84)y-987.84x^2-29.2681\,,\\
 \frac{2500(a-1)^2x}{ b-1 }\cdot\frac{\partial\varphi}{\partial y}(x,y)&=&7150xy-2400x^2-7990.125x+541\,.
\end{eqnarray*}
Among the four roots of the biquadratic equation {$ 276480000x^4-492148800x^3+165634183x-14048688=0$,} only one  $x_0=1.5419412254952502713930612434406$ belongs to $[\frac{a+1}2,a)=[1.3572,1.715)$, and the other three 0.6970...,
 0.0866...,
  $-0.5455...$ are much less than $\frac{a+1}2$.

\paragraph{\sc Case 3.2.} It follows from (\ref{ourformula}) and (\ref{xy>=1}) that $\varphi(x,y)\!=\!1\!+\!y\!-\! \left(  \!1\!\!-\!\!\frac{2(1\!-\!b)(a\!-\!x)}{a-1}  \right)  \!-\!y\left(  \frac12+\frac{(2b-1)(y-1)}{a-1}  \right)    +\left(1+\frac1x\right)
 \left(  1-\frac{2(1-b)(a-x)}{a-1}  \right)$, and
\begin{eqnarray*}
2500(a-1)^2x^2\cdot\frac{\partial\varphi}{\partial x}(x,y)&=&(624x^2+140.66)y-1053x^2-43.95625\,,\vspace{1mm}\\
 2\times 10^4(a-1)^2x\cdot\frac{\partial\varphi}{\partial y}(x,y)&=&-14872xy+4992x^2+16414.97x-1125.28\,.
\end{eqnarray*}
If $\frac{\partial\varphi}{\partial x}(x,y)=0=\frac{\partial\varphi}{\partial y}(x,y)$, then
$ \frac{1053x^2+43.95625}{624x^2+140.66}=y= \frac{4992x^2+16414.97x-1125.28}{14872x}$, giving
$92160000x^4-160274400x^3+48970779x-4682896=0$.
Among the  four roots of the biquadratic equation, only  $x_0\!=\!\!1.5249070327751520068531319284494$ belongs to $(\frac{a+1}2,a)$, the other three $0.642...$, $0.098...$,
  $-0.526...$ are less than 0.7. Thus $\varphi(x,y)$ has a unique critical point $(x_0, \frac{1053x_0^2+43.95625}{624x_0^2+140.66} )
$ when  $x\in(\frac{a+1}2,a)$ and $y\in(1,\frac{a+1}2)$. If $y=1$ and $x\in(\frac{a+1}2,a)$, then the derivative of  $\varphi(x,1)$ is $\frac{541}{7150x^2}-\frac{48}{143}< \frac{541 }{7150 }-\frac{48}{143}<0$, implying that $(x^*,y^*)$ does not belong to this case.

\paragraph{\sc Case 3.3.} It follows from (\ref{ourformula}) and (\ref{xy>=1}) that $\varphi(x,y)=1+y- \left(  1-\frac{2(1-b)(a-x)}{a-1}  \right)  -y\left(  \frac12-\frac{(2b-1)(1/y-1)}{a-1}  \right)   +\left(1+\frac1x\right)
 \left(  1-\frac{2(1-b)(a-x)}{a-1}  \right)\left(  \frac12-\frac{(2b-1)(1/y-1)}{a-1}  \right) $, and
\begin{eqnarray*}
 2\times10^4(a-1)^2x^2y\cdot\frac{\partial\varphi}{\partial x}(x,y)&=&(1560y-4992)x^2+1898.91y-1125.28 \,.
\end{eqnarray*}
If $\frac{\partial\varphi}{\partial x}(x,y)=0$, then $(1560y-4992)x^2+1898.91y-1125.28 =0$ implies $x^2
=\frac{5.41}{24}\left(\frac{70.4}{16-5y}-5.4\right) $.

\paragraph{\sc Case 3.4.} It follows from (\ref{ourformula}) and (\ref{xy>=1}) that $\varphi(x,y)=1+y- \left(  1-\frac{2(1-b)(a-x)}{a-1}  \right)  -y\left(   \frac{2(1-b)(a-1/y)}{a-1}   \right)    +\left(1+\frac1x\right)
 \left(  1-\frac{2(1-b)(a-x)}{a-1}  \right) \left(   \frac{2(1-b)(a-1/y)}{a-1}   \right)  $, and
\begin{eqnarray*}
 \frac{2500(a-1)^2x^2y}{b-1}\cdot\frac{\partial\varphi}{\partial x}(x,y)=(2400-541y)x^2-927.815y+541\,.
\end{eqnarray*}
If $\frac{\partial\varphi}{\partial x}(x,y)=0$, then $(2400-541y)x^2-927.815y+541=0$ implies $x^2
=\frac{3575}{2400-541y}-1.715
<\frac{3575}{2400-541}-1.715<0.21$, contradicting the hypothesis $x\in[\frac{a+1}2,a)$ of Case 3. Thus $\frac{\partial\varphi}{\partial x}(x,y)\ne 0$, and it suffices to consider the case where $x=\frac{a+1}2$.
Note that the derivative of $\varphi(\frac{a+1}2,y)$ is $\frac{573344}{647075y^2}-\frac{541}{3575}>\frac{573344}{647075}-\frac{541}{3575}>0$. We deduce that  $(x^*,y^*)$ does not belong to Case 3.4.

\paragraph{\sc Case 4.} In case of $y\in[\frac{a+1}2,a)$, we have 
$\varphi(x,y)=1+y- \left(  \frac12+\frac{(2b-1)(x-1)}{a-1} \right) -y\left (  1-\frac{2(1-b)(a-y)}{a-1}  \right)   +\left(1+\frac1x\right)
 \left(  \frac12+\frac{(2b-1)(x-1)}{a-1} \right) \left (  1-\frac{2(1-b)(a-y)}{a-1}  \right) $, and
\begin{eqnarray*}
62500(a-1)^2x^2\cdot\frac{\partial\varphi}{\partial x}(x,y)&=&(15600x^2+4875)y-26754x^2- \frac{35165}{32},
\vspace{0.5mm} \\
 \frac{200(a-1)^2x}{b-1}\cdot\frac{\partial\varphi}{\partial y}(x,y)&=& 572xy-208x^2-633.49x+65.
\end{eqnarray*}
If $\frac{\partial\varphi}{\partial x}(x,y)=0=\frac{\partial\varphi}{\partial y}(x,y)$, then
$\frac{26754x^2 +{35165}/{32}}{15600x^2+4875}=y=\frac{208x^2+633.49x-65}{572x}$, which implies $153600x^4-256608x^3+116435x-15000=0$. Among the four real roots of this biquadratic equation, only one root  $x_0=1.20271213592780899067095170...
$ belongs to  $I_3=[1,\frac{a+1}2)$, the other three  0.964..., 0.133..., $-0.629...$ are less than 1. Thus when $x\in(1,\frac{a+1}2)$ and $y\in(\frac{a+1}2,a)$, function $\varphi(x,y)$ has a unique  critical point of  $(x_0, \frac{26754x_0^2 +{35165}/{32}}{15600x_0^2+4875})%
$. {If $x=1$ and $y\in(\frac{a+1}2,a)$, then $\frac{\partial\varphi}{\partial y}(1,y)<\frac{b-1}{200(a-1)^2}(572\frac{a+1}2-208-633.49+65)=0$.} {If  $y=\frac{a+1}2$ and $x\in(1,\frac{a+1}2)$,  $\frac{\partial\varphi}{\partial x}(x,\frac{a+1}2)= \frac{19}{110x^2}-\frac{48}{275}<\frac{19}{110 }-\frac{48}{275}<0$.}

In case of $y\in[1,\frac{a+1}2)$, it follows from (\ref{ourformula}) and (\ref{xy>=1}) that
$\varphi(x,y)=1+y- \left(  \frac12+\frac{(2b-1)(x-1)}{a-1} \right) -y\left(  \frac12+\frac{(2b-1)(y-1)}{a-1}  \right)   +\left(1+\frac1x\right)
 \left(  \frac12+\frac{(2b-1)(x-1)}{a-1} \right) \left(  \frac12+\frac{(2b-1)(y-1)}{a-1}  \right) $, and
\[
10^4(a-1)^2x^2\cdot\frac{\partial\varphi}{\partial x}(x,y)=(2704y-4563)x^2+845y-\frac{4225}{16}\,.
\]
If $\frac{\partial\varphi}{\partial x}(x,y)=0$, then the above equation implies $x^2=\frac{6.875}{27-16y}-\frac5{16}
$.

In case of $y\in\frac2{a+1},1)$,  we have 
$\varphi(x,y)=1+y- \left(  \frac12+\frac{(2b-1)(x-1)}{a-1} \right) -y\left(  \frac12-\frac{(2b-1)(1/y-1)}{a-1}  \right)   +\left(1+\frac1x\right)
 \left(  \frac12+\frac{(2b-1)(x-1)}{a-1} \right) \left(  \frac12-\frac{(2b-1)(1/y-1)}{a-1}  \right) $, and
 \begin{eqnarray*}
32000(a-1)^2x^2y\cdot\frac{\partial\varphi}{\partial x}(x,y)=(2704y-8652.8)x^2+4563y-2704\,.
\end{eqnarray*}
If $\frac{\partial\varphi}{\partial x}(x,y)=0$, then the above equation implies $x^2=\frac{80-135y}{80y- 256}$.

\paragraph{\sc Case 5.} In case of $y\in[\frac{a+1}2,a)$, we have 
$\varphi(x,y)=1+y- \left(  \frac12-\frac{(2b-1)(1/x-1)}{a-1}  \right)
 -y\left (  1-\frac{2(1-b)(a-y)}{a-1}  \right)
 +\left(1+\frac1x\right)
\left(  \frac12-\frac{(2b-1)(1/x-1)}{a-1}  \right)
\left (  1-\frac{2(1-b)(a-y)}{a-1}  \right)$, and
\begin{eqnarray*}
2500(a-1)^2x^2\cdot\frac{\partial\varphi}{\partial y}(x,y)=-1716x^2y+2524.47 x^2+429x-624\,.
\end{eqnarray*}
If $\frac{\partial\varphi}{\partial y}(x,y)=0$, then the above equation implies $y=\frac{2524.47 x^2+429x-624}{1716x^2}
$.

In case of $y\in(1,\frac{a+1}2)$, we have 
$\varphi(x,y)=1+y- \left(  \frac12-\frac{(2b-1)(1/x-1)}{a-1}  \right)
 -y\left(  \frac12+\frac{(2b-1)(y-1)}{a-1}  \right)    +\left(1+\frac1x\right)
\left(  \frac12-\frac{(2b-1)(1/x-1)}{a-1}  \right)
\left(  \frac12+\frac{(2b-1)(y-1)}{a-1}  \right)    $, and
\begin{eqnarray*}
10^4  (a-1)^2x^3\cdot\frac{\partial\varphi}{\partial x}(x,y)&=&(5408-1859x)y-\frac{50193}{16}x -1690\,,\vspace{1mm}\\
10^4  (a-1)^2x^2\cdot\frac{\partial\varphi}{\partial y}(x,y)&=&-7436x^2y+ \frac{86697}8x^2 +1859x-2704\,.
\end{eqnarray*}
If  $\frac{\partial\varphi}{\partial x}(x,y)=0=\frac{\partial\varphi}{\partial y}(x,y)$, we have $ \frac{50193x/16+1690}{5408-1859x}=y=\frac{86697x^2/8+1859x-2704}{7436x^2}$, which implies $12177 x^3-11928 x^2-4224 x+4096=0$.  Among the three real roots of the cubic equation, only one root $x_0=0.98503501986004557612380063958196
$ belongs to $[\frac2{a+1},1)$, the other two $-0.587...$ and $ 0.581...$ are less than $0.59<0.73<\frac2{a+1}$. Hence, when $x\in(\frac2{a+1},1)$ and $y\in(1,\frac{a+1}2)$, function $\varphi(x,y)$ has a  unique critical point $(x_0,\frac{50193x_0/16+1690}{5408-1859x_0})
$.

\paragraph{\sc Case 6.} It follows from (\ref{ourformula}) and (\ref{xy>=1}) that $\varphi(x,y)=1+y- \left(  \frac{2(1-b)(a-1/x)}{a-1}  \right)
 -y\left (  1-\frac{2(1-b)(a-y)}{a-1}  \right)
+\left(1+\frac1x\right) \left(  \frac{2(1-b)(a-1/x)}{a-1}  \right)
\left (  1-\frac{2(1-b)(a-y)}{a-1}  \right)    $. Note that
\begin{eqnarray*}
 \frac{1250 (a-1)^2x^3}{b-1}\cdot\frac{\partial\varphi}{\partial x}(x,y)=\left(\frac{637637}{400}+858y\right)x-2400y+541\,,
\end{eqnarray*}
saying that $\frac{\partial\varphi}{\partial x}(x,y)$ is  continuous. If $\frac{\partial\varphi}{\partial x}(x,y)=0$, then   the above equation implies  $x=\frac{2400y-541}{637637/400+858y}
$.

\section{Details omitted in the proof of Theorem \ref{th:limit}}
$\lambda_2=\frac{169(3830\lambda_1-2367)}{2500(294\lambda_1-169)} =\frac{18069396176}{32281745375}+\frac{2054533\sqrt{154595269}}{129126981500}$,
$\lambda_3=\frac{169}{250}-\frac{227/2500}{2\lambda_2-1}=\frac{17274798609857}{22964052192750}-\frac{466378991\sqrt{154595269}}{22964052192750}$,
$\lambda_4=\frac12-\frac{0.0908}{2\lambda_3-0.648}=
=\frac{7635461853+9926731\sqrt{154595269}}{137555892260+32555020\sqrt{154595269}}$.

\section{More accurate data for Table \ref{tb2}}\label{apx:data}

 \begin{sidewaystable}[h!]
\begin{center}
\renewcommand{\arraystretch}{1.45}
{
\begin{tabular}{c|   c| c|c}
\hline
$n$ &   $x^*$  &$y^*$   &$ \varphi(x^*,y^*)$  \\ [0ex] %
 \hline

2&    $  1.6234 $  &  1.9313955485585601046238934941357
& 1.5067710963980944782747428689618
\\[0ex]
\hline

3&     1.9105670668253638133649019437144
    &  1.7231009559709047351816479931585
    &1.5412707360547943657991254440276
    \\[0ex]
\hline

 4& 1.8442 &  1.6932202822890392024390848746407
    &1.5559952304614046436626040303963
    \\ [0ex]
\hline

5   &  1.1418758035530052197259465174284
&  1.5193285943718712882599675140227
 &1.5634859374811611587574589066207
 \\[0ex]
\hline

6&1.1468400067157940025452944610151
    & 1.4989121029040246568797556392383
      &1.5679473463485327222599607921438
       \\[0ex]
\hline

7&  1.1447309125170275212468595782411
  &   1.484571582878536188943030538212 
 &1.5709131850723250245494000409963
 \\ [0ex]
\hline

8& 1.1192295299099999095204793775338
& 1.4661575387460290542662733059842
&1.5730320736692182670424244861351
\\ [0ex]
\hline

9&  1.37905& 1.5499380480779130220270189965959
&1.5746303803351011652011948172003
\\ [0ex]
\hline

10&  1.0673757071298466403419524795027
& 1.4334673997356224273147518033511
&1.5758769994650307921801868360490
\\ [0ex]
\hline

15&  1.0190835924366512532657225165167
& 1.3975512391926399047292761679273
&1.5795353026978935506718926262693
\\ [0ex]
\hline

20& 0.99970778780101732241547551893746
& 1.3798783532170473264955035119783
&1.5811826689588861505342265445506
\\ [0ex]
\hline

30& 1.36335
& 1.5259350403311591204413844025112
&1.5828322597883834887966258975212
\\ [0ex]
\hline

45& 0.98794524618663881465607801146689
& 1.3491108560548288330949162627803
&1.5839252560845547002088551380439
\\ [0ex]
\hline

70& 0.98688203426265674877981837198604
&1.3445069231326205461130030016648
&1.5846893836898565677273609253461
\\ [0ex]
\hline

100& 1.35915
& 1.5197905945463969779041235597106
&1.5850948284784656117096801608568
\\ [0ex]
\hline

200& 1.35835
& 1.5186228330081581461286077683326
&1.5855735652961084891643395167193
\\ [0ex]
\hline

500&0.9851752572294799614738280979509
& 1.3375636313202476923578387868474
&1.5858603199943162032070631539682
\\ [0ex]
\hline

1000& 1.35765
& 1.517614059591700259588264998456
&1.5859488979551645826404637773521
\\ [0ex]
\hline

5000& 0.9849521897949018445217461703578
& 1.3365770912703036632507291869842
&1.5860275919063095972916244136286
\\ [0ex]
\hline

$10^4$& 1.7149
& 1.6490128247935071925667216419242
&1.5860403769478577107321370931459
\\ [0ex]
\hline

$10^5$& 0.98496211975134262406328389261034
&  1.3364590898298425170054315458401
&1.5860442150763098823063046438619
\\ [0ex]
\hline

$10^6$&0.98495134013425345020920076422044
&1.3364514129617508508829359925585
&1.5860456086356999882980289839907
\\ [0ex]
\hline

$\vdots$ &$\vdots$
&$\vdots$
&$\vdots$\\ [0ex]
\hline

$\infty$& 1.3575     &  1.5174263351749539594843462246777
   &1.5860582220359942251519669298432\\ [0ex]
\hline
\end{tabular}
}\end{center}
\vspace{-2mm}\caption{Long digital expressions of data from Table \ref{tb2}, where the values of $a$, $b$ and $\delta$ are omitted.}
\label{tb3}
\end{sidewaystable}

\end{document}